\newtheorem{thrm}{Theorem}
\DeclareMathOperator{\re}{Re}
\DeclareMathOperator{\im}{Im}
\DeclareMathOperator{\erf}{erf}
\DeclareMathOperator{\tr}{Tr}
\DeclareMathOperator{\erfi}{erfi}
\DeclareMathOperator{\atanh}{atanh}
\renewcommand{\vec}[1]{\mathbf{#1}}
\begin{document}

\title{Higher-order EPR correlations and inseparability conditions for continuous variables}
\author{E. Shchukin}
\email{evgeny.shchukin@gmail.com}
\author{P. van Loock}
\email{loock@uni-mainz.de}
\affiliation{Johannes-Gutenberg University of Mainz, Institute of Physics, Staudingerweg 7, 55128 Mainz}

\begin{abstract}
We derive two types of sets of higher-order conditions for bipartite entanglement in terms of continuous variables. One 
corresponds to an extension of the well-known Duan inequalities from second to higher moments describing a kind of 
higher-order Einstein-Podolsky-Rosen (EPR) correlations. Only the second type, however, expressed by powers of the mode 
operators leads to tight conditions with a hierarchical structure. We start with a minimization problem for the 
single-partite case and, using the results obtained, establish relevant inequalities for higher-order moments satisfied 
by all bipartite separable states. A certain fourth-order condition cannot be violated by any Gaussian state and we 
present non-Gaussian states whose entanglement is detected by that condition. Violations of all our conditions are 
provided, so they can all be used as entanglement tests.
\end{abstract}

\pacs{03.67.Mn, 03.65.Ud, 42.50.Dv}

\keywords{continuous-variables; bipartite entanglement; Gaussian states}

\maketitle

\section{Introduction}

In general, higher-order problems (i.e., of order higher than two) are notoriously difficult to deal with, especially if 
we want to get exact solutions or obtain precise estimations. Here we start with the problem of finding the minimal 
value of the simple quantity $\langle\hat{x}^{2n} + \hat{p}^{2n}\rangle$ for $n \geq 1$, where we use the convention 
$[\hat{x}, \hat{p}] = i$. In other words, we are interested in a kind of uncertainty relation for quantum mechanical 
position $\hat{x}$ and momentum $\hat{p}$ operators, in particular, expressed in terms of moments greater than two. The 
operators $\hat{x}$ and $\hat{p}$ may correspond to the Hermitian quadrature operators of an optical mode. This problem 
was studied in Ref.~\cite{JMP-31-1947}, though the approach there was rather na\"{i}ve (computing eigenvalues by finding 
the roots of the characteristic equation gives very imprecise results even for matrices of moderate size). We improve 
and extend the results obtained there to apply them to derive inequalities for higher-order moments of bipartite 
separable states, which is the actual goal of the present work. 

In the bipartite case, violation of the well-known Duan criteria \cite{PhysRevLett.84.2722} in terms of second-order 
moments, $\langle (\hat{x}_a \pm \hat{x}_b)^2 + (\hat{p}_a \mp \hat{p}_b)^2 \rangle \geq 2$, is sufficient for the 
inseparability of an arbitrary two-mode state and even necessary for that of a two-mode Gaussian state (in a standard 
form). The second-order Duan conditions, originally derived through Cauchy-Schwarz and Heisenberg uncertainty 
inequalities independent of partial transposition, have been shown to be a special case of a hierarchy of bipartite 
inseparability conditions (expressed in terms of arbitrary moments of mode operators) that are based on partial 
transposition \cite{PhysRevLett.95.230502, PhysRevLett.96.050503, PhysRevA.74.032333}. Nonetheless, a nice and unique 
feature of the Duan inequalities  is their resemblance with EPR-type continuous-variable (CV) correlations, where a 
perfect violation of the Duan conditions corresponds to an infinitely squeezed Einstein-Podolsky-Rosen (EPR) state.  

Here we consider the possibility of having similarly intuitive, higher-order inseparability conditions for $\hat{x}$ and 
$\hat{p}$ (a kind of higher-order Duan/EPR criteria for the standard quadrature operators) \cite{PhysRevLett.114.100403, 
PhysRevLett.108.030503}. Note that in Ref.~\cite{PhysRevLett.114.100403} the converse situation of standard EPR 
correlations between higher-order quadratures has been studied. In addition, we will find a set of higher-order 
conditions in terms of the mode operators $\hat{a}$, $\hat{a}^\dagger$, which allow for a more systematic procedure 
(while being less directly accessible in an experiment). We demonstrate that all bipartite Gaussian states, regardless 
if inseparable or not, satisfy our fourth-order separability condition which, nevertheless, can be perfectly violated 
by non-Gaussian states. This condition is thus an example of a truly higher-order condition. We also make a similar 
conjecture about greater-than-fourth-order conditions.

The paper is organized as follows. In Section II we study the higher-order single-partite problem, show how to 
numerically compute the minimal values of $\langle \hat{x}^{2n} + \hat{p}^{2n} \rangle$ for $n>1$ and the corresponding 
wave functions, and give a conjecture about the form of these minimal solutions. In Section III we use the results from 
the preceding section to obtain a higher-order analogue of a well-known bipartite separability condition. In Section IV 
we present another approach to such conditions and give an example of a true higher-order separability condition that 
can be perfectly violated, while no Gaussian state violates it. We give a strict proof only for the fourth-order 
condition and make a conjecture for higher orders. A summary of all the results obtained is given in the conclusion. 
Appendices contain the technical details and computations.

\section{Single-partite case}

We are going to find the minimal value $\lambda^{(2n)}_{\mathrm{min}}>0$ of the quantity $\langle\hat{x}^{2n} + 
\hat{p}^{2n}\rangle$, $n \geq 1$, over all possible physical quantum states $\hat{\varrho}$. In other words, we are 
going to establish the tight inequality of the form
\begin{equation}\label{eq:xp1}
    \langle\hat{x}^{2n} + \hat{p}^{2n}\rangle \geq \lambda^{(2n)}_{\mathrm{min}}.
\end{equation}
Due to linearity of this quantity with respect to $\hat{\varrho}$ (with any $\hat{\varrho}$ being a convex combination 
of pure states $|\psi\rangle$), it is enough to consider pure states only:
\begin{equation}\label{eq:min}
    \min_{\hat{\varrho}}\langle\hat{x}^{2n} + \hat{p}^{2n}\rangle = 
    \min_{|\psi\rangle}\langle\hat{x}^{2n} + \hat{p}^{2n}\rangle.
\end{equation}
To find states that minimize this quantity we must solve the eigenvalue equation
\begin{equation}
    (\hat{x}^{2n} + \hat{p}^{2n})|\psi\rangle = \lambda|\psi\rangle, 
\end{equation}
and find the normalizable solutions(s) corresponding to the minimal eigenvalue $\lambda^{(2n)}_{\mathrm{min}}$ for which 
normalizable solutions exist. In terms of wave functions this equation reads as the following ordinary differential 
equation of $2n$-th order:
\begin{equation}\label{eq:eve}
    x^{2n} \psi(x) + (-1)^n \psi^{(2n)}(x) = \lambda \psi(x),
\end{equation}
with $\psi^{(2n)}(x) \equiv (\partial^{2n}\psi/\partial x^{2n})(x)$. To our best knowledge, solutions of this equation 
are known only for $n=1$. In this case Eq.~\eqref{eq:eve} becomes the Weber equation \cite{*[][{ p. 116, Eq.~(1)}] 
bateman2-1} after rescaling the argument: $\psi^{\prime\prime} + (\lambda - x^2) \psi = 0$. The general solution of this 
equation is given by
\begin{equation}
    \psi(x) = c_1 D_{-\frac{1-\lambda}{2}}(\sqrt{2}x) + c_2 D_{-\frac{1+\lambda}{2}}(i\sqrt{2}x),
\end{equation}
where $D_\nu(x)$ are parabolic cylinder functions \cite{*[][{ p. 117, Eq.~(2)}] bateman2-2} and $c_1, c_2$ are arbitrary 
complex numbers. For $\lambda=1$ this expression simplifies to
\begin{equation}
    \psi(x) = c_1 e^{-x^2/2} + c'_2 e^{-x^2/2} \erfi(x).
\end{equation}
The only normalizable function of this form is $c_1 e^{-x^2/2}$ and after normalization it coincides with the wave 
function of the vacuum state.

Note that if $\psi(x)$ is a solution to Eq.~\eqref{eq:eve} then its complex conjugate $\psi^*(x)$ is also a solution to 
that equation (since $\lambda$ is real) and thus the real part $\re \psi(x) = (1/2)(\psi(x) + \psi^*(x))$ is a real 
solution of Eq.~\eqref{eq:eve}. We now prove a couple of general facts about the solutions of that equation. First, we 
show that any real normalizable solution of Eq.~\eqref{eq:eve} satisfies the property
\begin{equation}\label{eq:x2p}
    \langle\hat{x}^{2n}\rangle = \langle\hat{p}^{2n}\rangle = \frac{\lambda}{2}.
\end{equation}
In fact, if we multiply both sides of Eq.~\eqref{eq:eve} by $x \psi'(x)$ and integrate over $(-\infty, +\infty)$, we get
\begin{equation}\label{eq:xpa}
\begin{split}
    \int x^{2n+1} &\psi(x) \psi'(x) \, dx \\
    &+ (-1)^n \int x \psi'(x) \psi^{(2n)}(x) \, dx = -\frac{\lambda}{2}.
\end{split}
\end{equation}
The first term on the left-hand side is easy to compute,
\begin{equation}
\begin{split}
    \int &x^{2n+1} \psi(x) \psi'(x) \, dx \\
    &= -\frac{2n+1}{2} \int x^{2n} \psi^2(x) \, dx = -\frac{2n+1}{2} \langle\hat{x}^{2n}\rangle.
\end{split}
\end{equation}
The second term requires more efforts, but by induction one can obtain the following expression for it:
\begin{equation}
\begin{split}
    \int x \psi'(x) \psi^{(2n)}(x) \, dx &= (-1)^n \frac{2n-1}{2} \int \psi^{(n) 2}(x) \, dx \\
    &= (-1)^n \frac{2n-1}{2} \langle\hat{p}^{2n}\rangle. \nonumber
\end{split}
\end{equation}
Substituting these expression into Eq.~\eqref{eq:xpa}, we get
\begin{equation}
    (2n+1) \langle\hat{x}^{2n}\rangle - (2n-1) \langle\hat{p}^{2n}\rangle = \lambda.
\end{equation}
On the other hand, if we multiply both sides of Eq.~\eqref{eq:eve} by $\psi(x)$ and integrate, we get 
$\langle\hat{x}^{2n}\rangle + \langle\hat{p}^{2n}\rangle = \lambda$. From these two equations we immediately obtain 
Eq.~\eqref{eq:x2p}.

Next, we prove that the minimal value \eqref{eq:min} is invariant under translations in phase space, i.e., for any real 
numbers $x_0$ and $p_0$ we have
\begin{equation}\label{eq:xpxp}
    \min_{\hat{\varrho}}\langle\hat{x}^{2n} + \hat{p}^{2n}\rangle = 
    \min_{\hat{\varrho}}\langle(\hat{x} - x_0)^{2n} + (\hat{p} - p_0)^{2n}\rangle.
\end{equation}
To prove this equation, we just show that for any state $\hat{\varrho}$ there is another state $\hat{\varrho}'$ such 
that 
\begin{equation}\label{eq:xp-delta}
    \langle(\hat{x} - x_0)^n\rangle = \langle\hat{x}^n\rangle', \quad
    \langle(\hat{p} - p_0)^n\rangle = \langle\hat{p}^n\rangle',
\end{equation}
for all $n \geq 1$. In fact, let us take an arbitrary state $\hat{\varrho}$ and consider the new state $\hat{\varrho}' = 
D(\alpha_0) \hat{\varrho} D^\dagger(\alpha_0)$, where $\alpha_0 = -(x_0 + i p_0)/\sqrt{2}$ and $D(\alpha)$ is the 
displacement operator, which is for any complex number $\alpha$ defined by $D(\alpha) = e^{\alpha \hat{a}^\dagger - 
\alpha^* \hat{a}}$. This operator shifts the position and momentum operators as
\begin{equation}
\begin{split}
    D^\dagger(\alpha) \hat{x} D(\alpha) &= \hat{x} + \sqrt{2} \re\alpha, \\
    D^\dagger(\alpha) \hat{p} D(\alpha) &= \hat{p} + \sqrt{2} \im\alpha.
\end{split}
\end{equation}
Using these relations, we can write
\begin{equation}
\begin{split}
    \langle\hat{x}^n\rangle' &= \tr(\hat{x}^n D(\alpha_0) \hat{\varrho} D^\dagger(\alpha_0)) \\
    &= \tr(D^\dagger(\alpha_0) \hat{x}^n D(\alpha_0) \hat{\varrho}) = \langle(\hat{x} - x_0)^n\rangle,
\end{split}
\end{equation}
so we get the first equality of Eq.~\eqref{eq:xp-delta}. The second one is obtained in the same way. We see that any 
number of the form $\langle(\hat{x} - x_0)^{2n} + (\hat{p} - p_0)^{2n}\rangle$ is also of the form $\langle\hat{x}^{2n} 
+ \hat{p}^{2n}\rangle$. Using the operator $D^\dagger(\alpha_0) = D(-\alpha_0)$ instead of $D(\alpha_0)$, we can also 
conclude that the inverse statement is true --- any number of the form $\langle\hat{x}^{2n} + \hat{p}^{2n}\rangle$ is 
also of the form $\langle(\hat{x} - x_0)^{2n} + (\hat{p} - p_0)^{2n}\rangle$, and thus Eq.~\eqref{eq:xpxp} holds.

We now present a method to obtain an analytical lower bound on the quantity $\langle\hat{x}^{2n}+\hat{p}^{2n}\rangle$ 
for $n=2$. We can expand $\langle\hat{x}^4+\hat{p}^4\rangle$ in terms of the creation and annihilation operators as 
follows:
\begin{equation}\label{eq:x4p4-1}
    \langle \hat{x}^4 + \hat{p}^4 \rangle = \frac{3}{2} + \frac{1}{2} \langle \hat{a}^4 + 
    \hat{a}^{\dagger 4} \rangle + 3 \langle \hat{a}^{\dagger 2} \hat{a}^2 \rangle + 6 \langle \hat{a}^\dagger 
    \hat{a} \rangle.
\end{equation}
The fourth powers can be estimated as
\begin{equation}\label{eq:a4}
\begin{split}
    |\langle\hat{a}^4\rangle|^2 &\leq \langle\hat{a}^{\dagger 2}\hat{a}^2\rangle 
    \langle\hat{a}^2\hat{a}^{\dagger 2}\rangle \\
    &= \langle\hat{a}^{\dagger 2}\hat{a}^2\rangle 
    (\langle\hat{a}^{\dagger 2}\hat{a}^2\rangle + 4\langle\hat{a}^\dagger\hat{a}\rangle + 2),
\end{split}
\end{equation}
and thus $\langle\hat{x}^4+\hat{p}^4\rangle$ satisfies the inequality
\begin{equation}
    \langle \hat{x}^4 + \hat{p}^4 \rangle \geq \frac{3}{2} + 3A + 6B - \sqrt{A(A+4B+2)},
\end{equation}
where $A = \langle\hat{a}^{\dagger 2}\hat{a}^2\rangle$ and $B = \langle\hat{a}^\dagger\hat{a}\rangle$. We demonstrate 
that
\begin{equation}\label{eq:AB2}
    3A + 6B - \sqrt{A(A+4B+2)} \geq -(3-2\sqrt{2}) \equiv -\delta
\end{equation}
for all $A, B \geq 0$. In fact, this inequality is equivalent to the following one: $3A + 6B + \delta \geq 
\sqrt{A(A+4B+2)}$. Since all the parameters are nonnegative, we can take the square of both sides of this inequality. 
Taking into account that $3\delta-1 = -2\sqrt{2}\delta$, we arrive at an equivalent statement,
\begin{equation}\label{eq:ABdelta}
    (2\sqrt{2}A - \delta)^2 + 36 B^2 + 32 AB + 12 B\delta \geq 0,
\end{equation}
which is obviously valid. The estimation given by Eq.~\eqref{eq:AB2} is the best possible under the restriction $A,B 
\geq 0$: For $A = \delta/2\sqrt{2}, B = 0$ the inequality in Eq.~\eqref{eq:ABdelta} is tight. We have just established 
the following result:
\begin{equation}\label{eq:xp4}
    \langle \hat{x}^4 + \hat{p}^4 \rangle \geq \frac{3}{2} - \delta \approx 1.32843.
\end{equation}
It is possible to extend this approach and apply it to obtain estimations for higher-order combinations of moments, but 
these inequalities are not tight for any $n \geq 2$; more precise results can be obtained by numerical optimization.

For $n>1$ the only way to work with Eq.~\eqref{eq:eve} is numerics. To solve this equation numerically we need a set of 
initial conditions. Many initial conditions lead to nonnormalizable solutions. Since the general solution is unknown, we 
need some way to determine what initial conditions to set to obtain a normalizable wave function.

One way to do this is to minimize $\langle\hat{x}^{2n} + \hat{p}^{2n}\rangle$ as a quadratic form of the coefficients in 
the Fock basis expansion. To write $\langle\hat{x}^{2n} + \hat{p}^{2n}\rangle$ as a quadratic form we need to express 
the powers of the position and momentum operators in terms of the creation and annihilation operators. According to 
Ref.~\cite{*[][{ p. 284, Eq.~(A.14)}] job-1-264}, we have
\begin{equation}\label{eq:xp-n}
\begin{split}
    (\hat{a} + \hat{a}^\dagger)^{2n} &= \sum^{2n}_{k=0} \frac{(2n)!}{k!} 
    \sum^{\lfloor\frac{2n-k}{2}\rfloor}_{l=0} \frac{\hat{a}^{\dagger 2n-k-2l} \hat{a}^k}{2^l l! (2n-k-2l)!}, \\
    (\hat{a} - \hat{a}^\dagger)^{2n} &= \sum^{2n}_{k=0} \frac{(2n)!}{k!} 
    \sum^{\lfloor\frac{2n-k}{2}\rfloor}_{l=0} \frac{(-1)^{k+l}\hat{a}^{\dagger 2n-k-2l} \hat{a}^k}{2^l l! (2n-k-2l)!}.
\end{split}
\end{equation}
From these relations we derive the following results:
\begin{equation}\label{eq:xp246}
\begin{split}
    \hat{x}^2 + \hat{p}^2 &= 1 + 2 \hat{a}^\dagger \hat{a}, \\
    \hat{x}^4 + \hat{p}^4 &= \frac{3}{2} + \Bigl[3\hat{a}^{\dagger 2}\hat{a}^2 + 6\hat{a}^\dagger 
    \hat{a}\Bigr] + \Bigl[\frac{1}{2}(\hat{a}^4 + \hat{a}^{\dagger 4})\Bigr], \\
    \hat{x}^6 + \hat{p}^6 &= \frac{15}{4} + \Bigl[5\hat{a}^{\dagger 3}\hat{a}^3 + 
    \frac{45}{2}\hat{a}^{\dagger 2}\hat{a}^2 + \frac{45}{2}\hat{a}^\dagger \hat{a}\Bigr] \\
    &+ \Bigl[\frac{3}{2}(\hat{a}^\dagger \hat{a}^5 + \hat{a}^{\dagger 5}\hat{a}) + 
    \frac{15}{4}(\hat{a}^4 + \hat{a}^{\dagger 4})\Bigr]. \\
\end{split}
\end{equation}
This list can be continued, but the expressions will become more and more complicated. We see that the right-hand sides 
of these relations contain terms of the form $\hat{a}^{\dagger p} \hat{a}^q$ with $p-q$ being a multiple of 4. It is 
easy to demonstrate that this is true in general, for all $n \geq 1$. In fact, each term on the right-hand side of 
Eq.~\eqref{eq:xp-n} has the form $\hat{a}^{\dagger 2n-k-2l} \hat{a}^k$, so the difference of powers is $2(n-k-l)$. On 
the other hand, the coefficient in front of this term in $\hat{x}^{2n} + \hat{p}^{2n}$ is proportional to $1 + 
(-1)^{n+k+l}$ and is nonzero only if $n + k + l$ is even. In such a case the difference $n - k - l$ is also even and 
thus the difference of powers $2(n-k-l)$ is a multiple of 4. So, independently of $n$, only terms $\hat{a}^{\dagger p} 
\hat{a}^q$ with $p-q$ being a multiple of 4 are present in $\hat{x}^{2n} + \hat{p}^{2n}$. 

The constants on the right-hand side of Eq.~\eqref{eq:xp246} correspond to the values of the quantity on the left-hand 
side for the vacuum state. Thus, note that only for $n=1$ is the vacuum state a minimum uncertainty state, whereas for 
$n>1$ other, especially non-Gaussian states have smaller uncertainties compared to the vacuum. The uncertainty value for 
the vacuum is easy to compute in general and it is given by the following expression: 
\begin{equation}\label{eq:vac}
    \langle\hat{x}^{2n} + \hat{p}^{2n}\rangle_0 = \frac{(2n)!}{2^{2n-1}n!}.
\end{equation}
Below, by the matrix $M_{2n}$ of $\langle\hat{x}^{2n} + \hat{p}^{2n}\rangle$ we mean the matrix of this quantity 
expressed in the Fock basis. As we will see, for $n>1$ this quantity can go below the value $\langle\hat{x}^{2n} + 
\hat{p}^{2n}\rangle_0$.

The simplest quantity of this form, $\langle\hat{x}^2 + \hat{p}^2\rangle$, is already diagonal in the Fock basis, the 
minimal eigenvalue being 1. The matrices of $\langle\hat{x}^{2n} + \hat{p}^{2n}\rangle$ with $n > 1$ have a simple 
structure --- if $n = 2m$ or $n = 2m+1$ then the matrix of $\langle\hat{x}^{2n} + \hat{p}^{2n}\rangle$ has $2m+1$ 
diagonals, where $m$ of them are below the main diagonal and $m$ of them are above. The distance between the adjacent 
diagonals is 4. For example, the matrix $M_4 = (a_{i,j})$ of $\langle\hat{x}^4 + \hat{p}^4\rangle$ has three diagonals 
and explicitly it reads as follows:
\begin{equation}\label{eq:M4}
    M_4 = \frac{3}{2} \mathbb{1} + 
    \begin{pmatrix}
        0 & 0 & 0 & 0 & \sqrt{6} & 0 & 0 & \ldots \\     
        0 & 6 & 0 & 0 & 0 & \sqrt{30} & 0 & \ldots  \\
        0 & 0 & 18 & 0 & 0 & 0 & \sqrt{90} & \ldots  \\
        0 & 0 & 0 & 36 & 0 & 0 & 0 & \ldots  \\
        \sqrt{6} & 0 & 0 & 0 & 60 & 0 & 0 & \ldots  \\
        0 & \sqrt{30} & 0 & 0 & 0 & 90 & 0 & \ldots  \\
        0 & 0 & \sqrt{90} & 0 & 0 & 0 & 126 & \ldots  \\
        \hdotsfor{8}
    \end{pmatrix},
\end{equation}
where $\mathbb{1}$ is the identity matrix and the nonzero elements are given by the following equations:
\begin{displaymath}
\begin{split}
    a_{k,k} &= \frac{3}{2} + 3k(k+1), \\
    a_{k,k+4} &= \frac{1}{2}\sqrt{(k+1)(k+2)(k+3)(k+4)}.
\end{split}
\end{displaymath}
The matrix $M_6$ has the same structure, but its nonzero elements become
\begin{displaymath}
\begin{split}
    a_{k,k} &= \frac{15}{4} + 5k\left(k^2 + \frac{3}{2}k + 2\right), \\
    a_{k,k+4} &= \frac{3}{2}\left(k + \frac{5}{2}\right)\sqrt{(k+1)(k+2)(k+3)(k+4)}.
\end{split}  
\end{displaymath}
The matrices $M_8$ and $M_{10}$ have five diagonals, $M_{12}$ and $M_{14}$ have seven, and so on. The elements of these 
matrices can be obtained with the help of Eq.~\eqref{eq:xp-n}. It can be now easily seen that for $n>1$ the quantity 
$\langle\hat{x}^{2n} + \hat{p}^{2n}\rangle$ goes below the value for the vacuum state. In fact, for $n>1$ there are at 
least two additional diagonals which are 4 positions below and above the main diagonal. The first element of the main 
diagonal (with indices $00$) is zero, and taking the state of the form $|\psi\rangle = N(|0\rangle + c|4\rangle$) we 
immediately see that for small negative values of $c$ the value of the quantity $\langle\hat{x}^{2n} + 
\hat{p}^{2n}\rangle$ is smaller than the value given by Eq.~\eqref{eq:vac}.

\begin{figure}[ht]
    \includegraphics{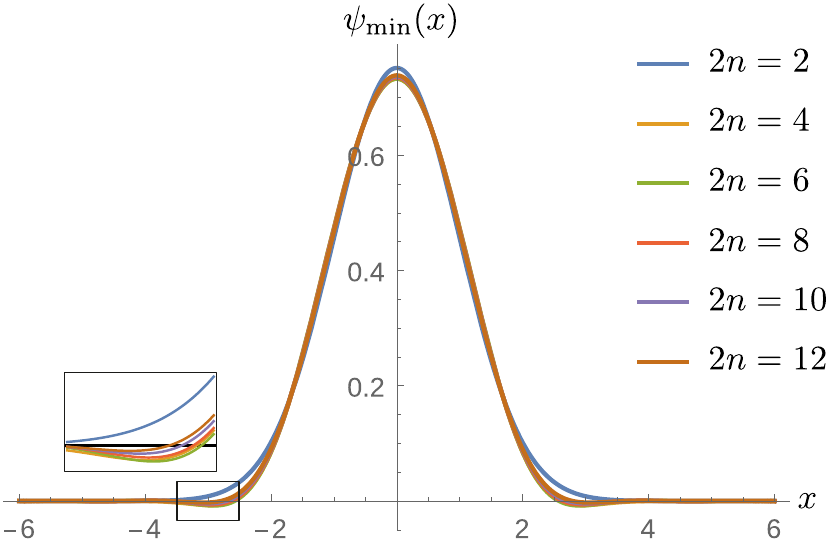}
    \caption{(Color online). The wave functions of the states minimizing the quantity $\langle\hat{x}^{2n} + 
    \hat{p}^{2n}\rangle$ for a few values of $n$.}\label{fig:xp-n}
\end{figure}

\begin{table}[ht]
\begin{tabular}{l|@{\extracolsep{3mm}}llllll}
\toprule[0.6pt]
 $2n$ & \hfil 2 & \hfil 4 & \hfil 6 & \hfil 8 & \hfil 10 & \hfil 12 \\
\midrule[0.4pt]\\[-2mm]
$\lambda^{(2n)}_{\mathrm{min}}$ & 1 & 1.3967 & 2.9530 & 8.2891 & 28.9741 &  
121.2168 \\[1mm]
\bottomrule[0.6pt]
\end{tabular}
\caption{Minimal eigenvalue for the first few values of $n$.}\label{tbl:lambda}
\end{table}

The minimal eigenvalue $\lambda^{(2n)}_{\mathrm{min}}$ of these matrices and the corresponding eigenvectors $(c_0, c_1, 
\ldots)$ (coefficients in the Fock basis) are easy to compute numerically. These values for small $n$ are given in 
Table~\ref{tbl:lambda}. Full details of the numerics to perform this computation are given in Appendix A. The 
eigenvalues are shown in Fig.~\ref{fig:psi1min-pic} together with the quantity $\langle\hat{x}^{2n} + 
\hat{p}^{2n}\rangle_0$ for the same $n$. The figure is in logarithmic scale, where a linear dependence would mean an 
exponential growth. According to this figure the minimal eigenvalues grow faster than any linear function, so that the 
minimal value of $\langle\hat{x}^{2n} + \hat{p}^{2n}\rangle$ increases faster than exponentially. 

The wave function 
\begin{equation}\label{eq:psi-c}
    \psi_{\mathrm{min}}(x) = \sum^{+\infty}_{k=0} c_k \psi_k(x)
\end{equation}
of the minimal state is shown in Fig.~\ref{fig:xp-n} for $n = 1, \ldots, 6$, where $\psi_k(x)$ are the wave functions of 
the Fock states. It can be seen that the functions for $n > 1$ are nearly indistinguishable and rather close to the 
vacuum wave function (i.e., to the solution for $n=1$). The main difference between the wave function of the vacuum 
state and the minimal wave functions for $n>1$ is that the latter take negative values.

\begin{figure}[ht]
    \includegraphics{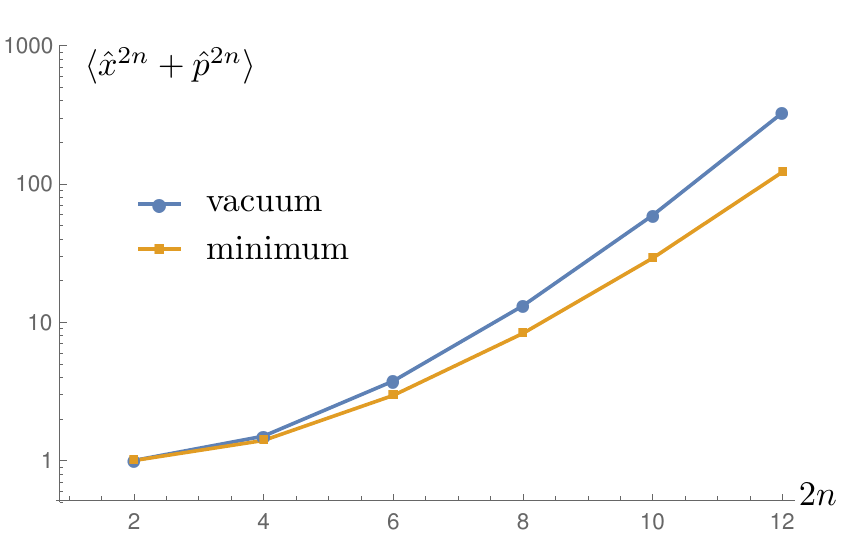}
    \caption{(Color online). Superexponential growth of the minimal eigenvalues $\lambda^{(n)}_{\mathrm{min}}$. Note 
    that the $y$-axis is in $\log$-scale.}\label{fig:psi1min-pic}
\end{figure}

It happens that the wave functions of the minimizing states can be accurately (with relative error $\approx$ 1\%) 
approximated by the following expression:
\begin{equation}\label{eq:BJ0}
    \psi_{a,b}(x) = c J_0(a x^2) e^{-b x^2},
\end{equation}
where $a$ and $b$ are appropriately chosen positive parameters and $c$ is determined from the normalization of 
$\psi_{a,b}(x)$. For $n=1$ this expression with $a=0$ and $b=1/2$ exactly reproduces the wave function of the vacuum 
state. The normalization is explicitly given by
\begin{equation}
    c = \frac{\pi^{3/4}}{2} 
    \frac{\sqrt[4]{\sqrt{a^2+b^2}+b}}{K\left(\sqrt{\frac{a-\sqrt{2b}\sqrt{\sqrt{a^2+b^2}-b}}{2a}}\right)}, 
\end{equation}
where $K(k)$ is the complete elliptic integral of the first kind defined by the following expression:
\begin{equation}
    K(k) = \int^{\pi/2}_0 \frac{d\theta}{\sqrt{1-k^2 \sin^2\theta}}.
\end{equation}
It is easy to verify the relations
\begin{equation}
    0 < \frac{a-\sqrt{2b}\sqrt{\sqrt{a^2+b^2}-b}}{2a} < \frac{1}{2},
\end{equation}
which are valid for all positive $a$ and $b$, and the relation
\begin{equation}
    a-\sqrt{2b}\sqrt{\sqrt{a^2+b^2}-b} = \frac{a^3}{8b^2} + O(a^4),
\end{equation}
valid for a fixed positive $b$, from which we derive that for $a = 0$ and $b = 1/2$ we get $c = \pi^{-1/4}$, as it must 
be for the vacuum state. For $n>1$ the expression \eqref{eq:BJ0} gives only an approximation to the exact minimizing 
state. In Table~\ref{tbl:1} we present the parameters $a$ and $b$ for the first few values of $n$.

\begin{table}[ht]
\begin{tabular}{r|@{\extracolsep{3mm}}lll}
\toprule[0.6pt]
 $2n$ & \hfil $a$ & \hfil $b$ & \hfil $c$ \\
\midrule[0.4pt]\\[-2mm]
2 & 0 & 0.5 & 0.751126 \\[1mm]
4 & 0.345424 & 0.402533 & 0.731575 \\[1mm]
6 & 0.350766 & 0.399127 & 0.730834 \\[1mm]
8 & 0.334137 & 0.409370 & 0.733031 \\[1mm]
10 & 0.314942 & 0.420320 & 0.735346 \\[1mm]
12 & 0.297065 & 0.429728 & 0.737304 \\[1mm]
\bottomrule[0.6pt]
\end{tabular}
\caption{Parameters $a$ and $b$.}\label{tbl:1}
\end{table}

To sum up, we have established a kind of higher-order uncertainty relation, which, however, is less general than the 
positivity of the density matrix. It generalizes known uncertainty relations by incorporating moments greater than two. 
Finally, at the end of this section, let us prove the following relation for the minimal eigenvalues:
\begin{equation}\label{eq:ll2}
    \lambda^{(4n)}_{\mathrm{min}} > \frac{1}{2} \left(\lambda^{(2n)}_{\mathrm{min}}\right)^2.
\end{equation}
In fact, due to the inequality $\langle\hat{A}^2\rangle \geq \langle\hat{A}\rangle^2$ we have
\begin{equation}
\begin{split}
    \lambda^{(4n)}_{\mathrm{min}} &= \langle\hat{x}^{4n}+\hat{p}^{4n}\rangle_{\mathrm{min}} \geq
    \langle\hat{x}^{2n}\rangle^2_{\mathrm{min}} + \langle\hat{p}^{2n}\rangle^2_{\mathrm{min}} \\
    &\geq \frac{1}{2}\left(\langle\hat{x}^{2n}\rangle_{\mathrm{min}} +
    \langle\hat{p}^{2n}\rangle_{\mathrm{min}}\right)^2 > \frac{1}{2}\left(\lambda^{(2n)}_{\mathrm{min}}\right)^2.
\end{split}
\end{equation}
Note that the subindex $\mathrm{min}$ here refers to the state corresponding to the minimal value 
$\lambda^{(4n)}_{\mathrm{min}}$, and it is not the state that minimizes the quantity 
$\langle\hat{x}^{2n}+\hat{p}^{2n}\rangle$, so, by definition of $\lambda^{(2n)}_{\mathrm{min}}$, we get the last step of 
these relations. Table~\ref{tbl:lambda} shows that in fact the minimal values $\lambda^{(2n)}_{\mathrm{min}}$ grow 
faster than is guaranteed by the inequality \eqref{eq:ll2}. This inequality will be useful to demonstrate the 
higher-order separability conditions we construct in the next section are not trivial consequence of lower-order 
conditions.

\section{Bipartite case}

We start with a well-known result \cite{PhysRevLett.84.2722}, which is valid for all bipartite separable states,
\begin{equation}\label{eq:xp-2}
    \langle (\hat{x}_a \pm \hat{x}_b)^2 + (\hat{p}_a \mp \hat{p}_b)^2 \rangle \geq 2.
\end{equation}
There are at least two possible ways to extend this inequality to higher orders. We develop them in the subsections that 
follow. The first one is easy to implement experimentally and easy to violate, but it is not so easy to obtain the 
optimal result. The main disadvantage of this approach is that it is not a ``true" hierarchy of conditions for 
higher-order moments, since all of these conditions can be violated by Gaussian states. Nevertheless, the higher-order 
inequalities we derive are stronger than those based on only second-order moments. The other approach leads to tight 
conditions, but these conditions are more difficult to implement. On the other hand, these conditions may be referred to 
as truly higher-order as they cannot be violated by Gaussian states. We have confirmed this by numerical simulation 
while we were able to strictly prove this only in the simplest case of the fourth-order condition. Note that an example 
of the converse situation, namely non-Gaussian states whose entanglement cannot be detected via the (standard) 
second-order conditions but only via fourth-order conditions, is given in Ref.~\cite{PhysRevLett.95.230502}.

\subsection{Approach one}

The most obvious way to extend inequality \eqref{eq:xp-2} is to replace second powers by higher numbers and try to 
establish an inequality of the form
\begin{equation}\label{eq:xp2n}
    \langle (\hat{x}_a \pm \hat{x}_b)^{2n} + (\hat{p}_a \mp \hat{p}_b)^{2n} \rangle \geq ?,
\end{equation}
with a positive bound on the right-hand side. 

Unfortunately, it is rather difficult to find the tight lower bound of the left-hand side of this inequality over all 
separable states. A suboptimal result can be obtained by noting that 
\begin{equation}
\begin{split}
    \langle (\hat{x}_a \pm \hat{x}_b)^{2n} &+ (\hat{p}_a \mp \hat{p}_b)^{2n} \rangle^{\mathrm{PT}} \\
    &= \langle (\hat{x}_a \pm \hat{x}_b)^{2n} + (\hat{p}_a \pm \hat{p}_b)^{2n} \rangle,
\end{split}
\end{equation}
where superscript $\mathrm{PT}$ means the partially transposed state, and finding the minimal value of the quantities
\begin{equation}\label{eq:xpn}
    \langle (\hat{x}_a \pm \hat{x}_b)^{2n} + (\hat{p}_a \pm \hat{p}_b)^{2n} \rangle
\end{equation}
over the set of all bipartite quantum states (representing a physicality bound for all PPT states with regards to the 
original combination \eqref{eq:xp2n} and hence yielding a necessary condition for all separable states). The difference 
in the problem of minimizing over all quantum states and the problem of minimizing over separable states is that the 
former reduces to minimizing a quadratic form, which is straightforward to do numerically, while the latter reads as a 
minimization of a biquadratic form, for which no numerical technique exists. The former value can be obtained with the 
same approach that we used in the previous section for single-partite quantities. As we will see below, this bipartite 
minimal value has a simple relation to the single-partite one, as in the previous section.

To find states that minimize the quantity \eqref{eq:xpn} for some $n$ we have to solve the eigenvalue problem
\begin{equation}\label{eq:wf-bip}
\begin{split}
    (x \pm y)^{2n}\psi(x,y) &+ (-1)^n \left(\frac{\partial}{\partial x} \pm 
    \frac{\partial}{\partial y}\right)^{2n}\psi(x,y) \\
    &= \Lambda \psi(x,y),
\end{split}
\end{equation}
and find the minimal eigenvalue $\Lambda$. These equations look more difficult than Eq.~\eqref{eq:eve}, but they can be 
easily reduced to that equation. In fact, let us introduce the function $\tilde{\psi}(u, v)$ via
\begin{equation}\label{eq:psi-tilde}
    \tilde{\psi}(u, v) = \psi\left(\frac{u+v}{\sqrt{2}}, \pm \frac{u-v}{\sqrt{2}}\right),
\end{equation}
where the choice of the sign corresponds to the sign in Eq.~\eqref{eq:xp2n}. This function is normalized and thus it is 
also a wave function. The relation 
\eqref{eq:psi-tilde}
is invertible
\begin{equation}
    \psi(x, y) = \tilde{\psi}\left(\frac{x \pm y}{\sqrt{2}}, \frac{x \mp y}{\sqrt{2}}\right),
\end{equation}
from which we obtain the following equality:
\begin{equation}
    \left(\frac{\partial}{\partial x} \pm \frac{\partial}{\partial y}\right)^{2n} \psi(x, y) = 
    2^n \frac{\partial^{2n} \tilde{\psi}}{\partial u^{2n}}(u, v),
\end{equation}
where $u = (x \pm y)/\sqrt{2}$ and $v = (x \mp y)/\sqrt{2}$. Substituting this into Eq.~\eqref{eq:wf-bip}, we get an 
equation for $\tilde{\psi}$
\begin{equation}\label{eq:eve2}
    u^{2n}  \tilde{\psi} + (-1)^n \frac{\partial^{2n} \tilde{\psi}}{\partial u^{2n}} = 
    \frac{\Lambda}{2^n} \tilde{\psi},
\end{equation}
which looks very similar to Eq.~\eqref{eq:eve}. Since the minimal solution of that equation is unique, minimal solutions 
of Eq.~\eqref{eq:eve2} are given by
\begin{equation}
    \tilde{\Psi}_{\mathrm{min}}(u, v) = \psi_{\mathrm{min}}(u) \varphi(v),
\end{equation}
where $\psi_{\mathrm{min}}(u)$ is the minimal solution of Eq.~\eqref{eq:eve}, given by Eq.~\eqref{eq:psi-c}, and 
$\varphi(v)$ is an arbitrary normalized function. The minimal solutions of Eq.~\eqref{eq:wf-bip} then become
\begin{equation}\label{eq:Psi-bip}
    \Psi_{\mathrm{min}}(x, y) = \psi_{\mathrm{min}}\left(\frac{x \pm y}{\sqrt{2}}\right) 
\varphi\left(\frac{x \mp y}{\sqrt{2}}\right).
\end{equation}
The bipartite minimal eigenvalue in both cases is just the appropriately scaled single-partite minimal eigenvalue: 
\begin{equation}\label{eq:Ll}
    \Lambda^{(2n)}_{\mathrm{min}} = 2^n \lambda^{(2n)}_{\mathrm{min}}. 
\end{equation}
The bipartite minimal eigenvalues can be numerically computed independently with the same approach as for the 
single-partite case, by minimizing the quantities defined by Eq.~\eqref{eq:xpn} as quadratic forms with respect to the 
bipartite Fock basis. The numerical results agree with the analytical relation \eqref{eq:Ll}. 

\begin{table}[ht]
\begin{tabular}{l|@{\extracolsep{3mm}}llllll}
\toprule[0.6pt]
 $2n$ & \hfil 2 & \hfil 4 & \hfil 6 & \hfil 8 & \hfil 10 & \hfil 12 \\
\midrule[0.4pt]\\[-2mm]
$\Lambda^{(2n)}_{\mathrm{min}}$ & 2 & 5.5868 & 23.624 & 132.626 & 927.171 &  
7757.88 \\[1mm]
\bottomrule[0.6pt]
\end{tabular}
\caption{Minimal eigenvalue for the first few values of $n$.}\label{tbl:lambda2}
\end{table}

We have established the following inequalities for all bipartite separable states:
\begin{equation}\label{eq:xpn-sep}
    \langle (\hat{x}_a \pm \hat{x}_b)^{2n} + (\hat{p}_a \mp \hat{p}_b)^{2n} \rangle \geq   
    \Lambda^{(2n)}_{\mathrm{min}}.
\end{equation}
From Eqs.\eqref{eq:Ll} and \eqref{eq:xp4} we have $\Lambda^{(4)}_{\mathrm{min}} = 4 \lambda^{(4)}_{\mathrm{min}} \geq 
5.3137$, so that we have analytically established the inequality
\begin{equation}
    \langle (\hat{x}_a \pm \hat{x}_b)^4 + (\hat{p}_a \mp \hat{p}_b)^4 \rangle \geq 2(4\sqrt{2} - 3) \approx 5.3137
\end{equation}
for all bipartite separable states. This inequality is not tight, but this result is obtained analytically. A better 
estimation (obtained numerically) can be taken from Table~\ref{tbl:lambda} and it reads as
\begin{equation}\label{eq:xp4-1}
    \langle (\hat{x}_a \pm \hat{x}_b)^4 + (\hat{p}_a \mp \hat{p}_b)^4 \rangle \gtrsim 5.5868.
\end{equation}
We see that the analytical result is rather close to the more precise lower bound found numerically. But even this lower 
bound, as well as Eq.~\eqref{eq:xpn-sep} in general, is unlikely to be tight, but nevertheless these inequalities 
represent some nontrivial tests for higher-order moments. From Table~\ref{tbl:lambda} of the single-partite minimal 
values and  relation \eqref{eq:Ll}, we derive  Table~\ref{tbl:lambda2} of the bipartite (not necessarily tight) lower 
bounds. Note that the bounds in Table~\ref{tbl:lambda2} are obtained with the help of partial transposition. The true 
minimal values with regards to the left-hand side of Eq.~\eqref{eq:xp2n} may always be larger, but will never be as 
large as the corresponding value for the vacuum state.

Here we should make an important observation: If we have a lower bound of the form \eqref{eq:xpn-sep}, then we can 
immediately obtain the following lower bound for $\langle (\hat{x}_a \pm \hat{x}_b)^{4n} + (\hat{p}_a \mp 
\hat{p}_b)^{4n} \rangle$ in the same way as we derived inequality \eqref{eq:ll2} for the single-partite case:
\begin{equation}
\begin{split}
    &\langle (\hat{x}_a \pm \hat{x}_b)^{4n} + (\hat{p}_a \mp \hat{p}_b)^{4n} \rangle \\
    & \geq \langle (\hat{x}_a \pm \hat{x}_b)^{2n}\rangle^2 + \langle(\hat{p}_a \mp \hat{p}_b)^{2n} \rangle^2 \\
    & \geq \frac{1}{2}\langle (\hat{x}_a \pm \hat{x}_b)^{2n} + (\hat{p}_a \mp \hat{p}_b)^{2n} \rangle^2 \geq
    \frac{1}{2} \bigl(\Lambda^{(2n)}_{\mathrm{min}}\bigr)^2.
\end{split}
\end{equation}
This estimation can be most easily made provided that we know only the relation in Eq.~\eqref{eq:xpn-sep} without any 
additional assumptions. The difference between the inequalities \eqref{eq:xp1} and \eqref{eq:xpn-sep} is that the former 
is a general property of physical systems (provided that quantum mechanics gives an adequate description of the physical 
world), while the latter is a property of bipartite separable states, i.e., it is a condition that can be tested against 
all quantum states. Those states that fail this test are thus verified to be entangled. We conclude that for the 
inequalities \eqref{eq:xpn-sep} to form a nontrivial hierarchy of conditions, the minimal values 
$\Lambda^{(2n)}_{\mathrm{min}}$ must satisfy the strict inequalities
\begin{equation}\label{eq:LL2}
    \Lambda^{(4n)}_{\mathrm{min}} > \frac{1}{2} \left(\Lambda^{(2n)}_{\mathrm{min}}\right)^2.
\end{equation}
This is the main requirement for the minimal eigenvalues so that no condition of the form \eqref{eq:xpn-sep} is a 
trivial consequence of another one. The inequality \eqref{eq:ll2} combined with the relation \eqref{eq:Ll} gives us 
exactly the desired result \eqref{eq:LL2}. From Table~\ref{tbl:lambda2} we see that the numbers 
$\Lambda^{(2n)}_{\mathrm{min}}$ we obtained numerically grow much faster than given by the main requirement, and thus 
the inequalities \eqref{eq:xpn-sep} form a hierarchy of separability conditions where indeed the power of each condition 
increases with its order. Also note that we do not need any new ``hardware'' to perform all the tests given by the 
hierarchy \eqref{eq:xpn-sep}; the same experimental setup developed for testing the simplest inequality \eqref{eq:xp-2} 
can be used to check all the inequalities in the hierarchy. This is one of the biggest advantages of this hierarchy.

In Appendix B we show that the inequality \eqref{eq:xp4-1} can be strengthened by using higher-order uncertainties 
instead of partial transposition
\begin{equation}
    \langle (\hat{x}_a \pm \hat{x}_b)^4 + (\hat{p}_a \mp \hat{p}_b)^4 \rangle \gtrsim 5.7934.
\end{equation}
On the other hand, if we take the factorizable state of the form $|\psi\rangle_a |0\rangle_b$ then
\begin{equation}
    \langle (\hat{x}_a + \hat{x}_b)^4 + (\hat{p}_a - \hat{p}_b)^4 \rangle = 
    6 + \frac{1}{2}\langle \hat{a}^4 + \hat{a}^{\dagger 4} + 6\hat{a}^{\dagger 
    2}\hat{a}^2 + 24 \hat{a}^\dagger \hat{a} \rangle. \nonumber
\end{equation}
The last term can be optimized and its minimal value is equal to $\approx -0.0728$. In fact, even for the state 
$|\psi\rangle_a = N(|0\rangle_a + c|4\rangle)$ we can get the value of $-0.0714$ for a small negative value of $c$. So, 
the true minimal value of $\langle (\hat{x}_a + \hat{x}_b)^4 + (\hat{p}_a - \hat{p}_b)^4 \rangle$ is in the narrow 
interval between $5.7934$ and $5.9272$. Even though we do not obtain the exact solution, from practical point of view 
one can say that it is the end of the story of the quantity of the fourth order. Unfortunately, the method used there 
cannot be easily applied to higher-order moments in a systematic way as we have done with PPT approach.

We now show that the inequalities \eqref{eq:xpn-sep} can be perfectly violated by the bipartite two-mode squeezed vacuum 
state.
\begin{thrm}
For the bipartite squeezed vacuum state defined by
\begin{equation}\label{eq:sq-vac}
    |0\rangle_\lambda = \sqrt{1-\lambda^2} \sum^{+\infty}_{n=0} \lambda^n |n,n\rangle,
\end{equation}
the following relations hold for any $n \geq 1$: 
\begin{equation}\label{eq:xp-sq}
    \langle (\hat{x}_a \pm \hat{x}_b)^{2n} \rangle = \langle (\hat{p}_a \mp \hat{p}_b)^{2n} \rangle 
    = \frac{(2n)!}{2^n n!} \left(\frac{1 \pm \lambda}{1 \mp \lambda}\right)^n.
\end{equation}
The squeezed vacuum state thus perfectly violates the inequalities \eqref{eq:xpn-sep} when $\lambda \to \mp 1$, 
respectively. Indeed, the two-mode squeezed vacuum state becomes a simultaneous zero-eigenstate of $\hat{x}_a \pm 
\hat{x}_b$ and $\hat{p}_a \mp \hat{p}_b$ in the limit $\lambda \to \mp 1$.
\end{thrm}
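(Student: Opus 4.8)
The plan is to exploit the fact that the state \eqref{eq:sq-vac} is generated from the vacuum by the two-mode squeezing operator, and that the combinations $\hat{x}_a \pm \hat{x}_b$ and $\hat{p}_a \mp \hat{p}_b$ are eigen-operators of the associated Bogoliubov transformation. First I would set $\lambda = \tanh r$ and write $|0\rangle_\lambda = S(r)|0,0\rangle$ with $S(r) = \exp[r(\hat{a}^\dagger \hat{b}^\dagger - \hat{a}\hat{b})]$. The identity $\sqrt{1-\lambda^2} = 1/\cosh r$ together with the Fock expansion of $S(r)|0,0\rangle$ (equivalently, the relation $\hat{a}|0\rangle_\lambda = \lambda \hat{b}^\dagger|0\rangle_\lambda$) confirms this parametrization.

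Next I would pass to the Heisenberg picture. The standard transformation $S^\dagger(r)\hat{a}S(r) = \hat{a}\cosh r + \hat{b}^\dagger \sinh r$ and its analogue for $\hat{b}$ give, upon forming the quadratures $\hat{x}$ and $\hat{p}$, the scaling relations $S^\dagger(r)(\hat{x}_a \pm \hat{x}_b)S(r) = e^{\pm r}(\hat{x}_a \pm \hat{x}_b)$ and $S^\dagger(r)(\hat{p}_a \mp \hat{p}_b)S(r) = e^{\pm r}(\hat{p}_a \mp \hat{p}_b)$. The crucial structural point is the sign pairing: the same factor $e^{\pm r}$ multiplies both $\hat{x}_a \pm \hat{x}_b$ and $\hat{p}_a \mp \hat{p}_b$, which is precisely why these two EPR combinations appear together in \eqref{eq:xpn-sep}.

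Using $(S^\dagger A S)^{2n} = S^\dagger A^{2n} S$ and $S(r)|0,0\rangle = |0\rangle_\lambda$, each moment then collapses onto a vacuum expectation value, $\langle(\hat{x}_a \pm \hat{x}_b)^{2n}\rangle_\lambda = e^{\pm 2nr}\langle 0,0|(\hat{x}_a \pm \hat{x}_b)^{2n}|0,0\rangle$, and likewise for the momentum combination. To evaluate the vacuum moment I would apply the passive (beam-splitter) rotation to the coordinates $\hat{u}_\pm = (\hat{x}_a \pm \hat{x}_b)/\sqrt{2}$, under which $|0,0\rangle$ is invariant; hence $\hat{u}_\pm$ is an ordinary single-mode quadrature in its vacuum, and $\langle 0,0|(\hat{x}_a \pm \hat{x}_b)^{2n}|0,0\rangle = 2^n \langle \hat{u}_\pm^{2n}\rangle_0 = (2n)!/(2^n n!)$, the single-mode value implicit in \eqref{eq:vac}. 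Finally, $e^{\pm 2r} = (1\pm\lambda)/(1\mp\lambda)$ turns $e^{\pm 2nr}(2n)!/(2^n n!)$ into the stated expression, and the identical scaling of $\hat{p}_a \mp \hat{p}_b$ yields the claimed equality of the two moments. The zero-eigenstate remark follows immediately: the variance of each combination is $e^{\pm 2r}$, which tends to $0$ as $\lambda \to \mp 1$ (i.e. $r \to \mp\infty$), so $\hat{x}_a \pm \hat{x}_b$ and $\hat{p}_a \mp \hat{p}_b$ both become sharp at the value $0$ and the left-hand side of \eqref{eq:xpn-sep} vanishes.

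I expect the only delicate point to be the sign bookkeeping in the Bogoliubov step, i.e. verifying that the upper/lower signs in the position and momentum combinations are correctly cross-paired so that both members scale by the same $e^{\pm r}$; once that is pinned down, the collapse onto the vacuum and the elementary Gaussian vacuum moment are routine.
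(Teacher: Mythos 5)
Your proposal is correct, and it takes a genuinely different route from the paper. The paper works in the Schr\"odinger picture: it writes $|0\rangle_\lambda \propto e^{\lambda \hat{a}^\dagger\hat{b}^\dagger}|0,0\rangle$ and evaluates the generating functions $\mathcal{X}_\pm(t)=\langle e^{t(\hat{x}_a\pm\hat{x}_b)}\rangle$ by repeated use of the BCH identity $e^{\hat{A}}e^{\hat{B}}=e^{[\hat{A},\hat{B}]}e^{\hat{B}}e^{\hat{A}}$ together with the reordering formula $e^{\mu\hat{a}\hat{b}}e^{\nu\hat{a}^\dagger\hat{b}^\dagger}|0,0\rangle=(1-\mu\nu)^{-1}\exp\bigl(\tfrac{\nu}{1-\mu\nu}\hat{a}^\dagger\hat{b}^\dagger\bigr)|0,0\rangle$, arriving at $\mathcal{X}_\pm(t)=\mathcal{P}_\mp(t)=\exp\bigl(\tfrac{1\pm\lambda}{1\mp\lambda}\tfrac{t^2}{2}\bigr)$ and reading off the moments. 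You instead work in the Heisenberg picture: setting $\lambda=\tanh r$, you use the Bogoliubov scaling $S^\dagger(r)(\hat{x}_a\pm\hat{x}_b)S(r)=e^{\pm r}(\hat{x}_a\pm\hat{x}_b)$, $S^\dagger(r)(\hat{p}_a\mp\hat{p}_b)S(r)=e^{\pm r}(\hat{p}_a\mp\hat{p}_b)$ (I checked the sign cross-pairing; it is right with $S(r)=e^{r(\hat{a}^\dagger\hat{b}^\dagger-\hat{a}\hat{b})}$), which collapses each moment onto the two-mode vacuum moment $2^n(2n)!/(2^{2n}n!)=(2n)!/(2^n n!)$ via the beam-splitter rotation, and $e^{\pm 2r}=(1\pm\lambda)/(1\mp\lambda)$ finishes it. What each buys: the paper's calculation yields the full characteristic function of the EPR combinations in closed form without invoking the squeezing unitary or its disentangled form, while your argument is shorter, makes the structural reason for the sign pairing in \eqref{eq:xpn-sep} transparent (both members of a pair scale by the same $e^{\pm r}$), and reduces the whole computation to a standard Gaussian vacuum moment; it does, however, quietly rely on the disentangling identity $S(r)|0,0\rangle=\mathrm{sech}\,r\sum_n\tanh^n r\,|n,n\rangle$ to identify \eqref{eq:sq-vac} with $S(r)|0,0\rangle$, which is essentially the same normal-ordering fact the paper uses explicitly. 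Both proofs establish the same statement; no gap.
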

\begin{proof}
The squeezed vacuum state can be more compactly written in the following way:
\begin{equation}\label{eq:sq-st}
    |0\rangle_\lambda = \sqrt{1-\lambda^2} e^{\lambda \hat{a}^\dagger \hat{b}^\dagger} |0,0\rangle.
\end{equation}
The desired quantities are easy to compute with the help of the generating functions
\begin{equation}
\begin{split}
    \mathcal{X}_\pm(t) &= \langle e^{t(\hat{x}_a \pm \hat{x}_b)} \rangle = e^{t^2/2} 
\langle e^{u^*\hat{a}^\dagger} 
    e^{v^*\hat{b}^\dagger} e^{u\hat{a}} e^{v\hat{b}} \rangle, \\
    \mathcal{P}_\pm(t) &= \langle e^{t(\hat{p}_a \pm \hat{p}_b)} \rangle = e^{t^2/2} \langle 
e^{u^{\prime 
*}\hat{a}^\dagger} e^{v^{\prime *}\hat{b}^\dagger} e^{u'\hat{a}} e^{v'\hat{b}} \rangle,
\end{split}
\end{equation}
where $u = \pm v = t/\sqrt{2}$ and $u' = \pm v' = it/\sqrt{2}$. Using the representation of the 
squeezed state in the form \eqref{eq:sq-st}, one can compute the following quantity:
\begin{equation}
\begin{split}
    &{}_\lambda\langle 0| e^{u^*\hat{a}^\dagger} e^{v^*\hat{b}^\dagger} e^{u\hat{a}} e^{v\hat{b}}
    |0\rangle_\lambda \\
    &= (1-\lambda^2) \langle 0, 0| e^{\lambda \hat{a} \hat{b}} 
e^{u^*\hat{a}^\dagger} e^{v^*\hat{b}^\dagger} e^{u\hat{a}} e^{v\hat{b}} e^{\lambda \hat{a}^\dagger 
\hat{b}^\dagger} |0, 0\rangle.
\end{split}
\end{equation}
The product inside the brackets can be transformed with the help of the 
BCH relation
\begin{equation}\label{eq:AB}
    e^{\hat{A}} e^{\hat{B}} = e^{[\hat{A},\hat{B}]} e^{\hat{B}} e^{\hat{A}},
\end{equation}
which is valid if the commutator $[\hat{A},\hat{B}]$ commutes with both $\hat{A}$ and $\hat{B}$, 
and the equality
\begin{equation}\label{eq:ee}
    e^{\mu \hat{a} \hat{b}} e^{\nu \hat{a}^\dagger \hat{b}^\dagger} |0,0\rangle = 
    \frac{1}{1 - \mu \nu} \exp\left(\frac{\nu}{1 - \mu \nu}\hat{a}^\dagger \hat{b}^\dagger\right) 
|0, 0\rangle,
\end{equation}
which is derived in Ref.~\cite{*[][{ p. 187}] psqq}. Applying Eq.~\eqref{eq:AB} several times with, 
for example, $\hat{A} = \hat{a}$, $\hat{B} = \hat{a}^\dagger \hat{b}^\dagger$, $[\hat{A},\hat{B}] = 
\hat{b}^\dagger$, and finally using Eq.~\eqref{eq:ee}, we have
\begin{equation}
\begin{split}
    {}_\lambda\langle 0| &e^{u^*\hat{a}^\dagger} e^{v^*\hat{b}^\dagger} e^{u\hat{a}} e^{v\hat{b}}
    |0\rangle_\lambda \\
    &= \exp\left(\frac{\lambda}{1-\lambda^2}(uv + u^* v^* + \lambda |u|^2 + \lambda |v|^2)\right),
\end{split}
\end{equation}
from which we immediately obtain that
\begin{equation}
    \mathcal{X}_\pm(t) = \mathcal{P}_\mp(t) = \exp\left(\frac{1 \pm \lambda}{1 \mp \lambda} 
\frac{t^2}{2}\right).
\end{equation}
Expanding both sides in $t$ and comparing the coefficients we get the relations \eqref{eq:xp-sq}.
\end{proof}

The inequalities \eqref{eq:xpn-sep} can be used to demonstrate that some of the 
minimal states \eqref{eq:Psi-bip} are entangled. In fact, let us compute the 
left-hand side of the inequality \eqref{eq:xpn-sep} on a state of the form 
\eqref{eq:Psi-bip}. One can easily find that
\begin{equation}
    \langle (\hat{x}_a \pm \hat{x}_b)^{2n} + (\hat{p}_a \mp \hat{p}_b)^{2n} \rangle = 
    2^n 
    \begin{cases}
        \langle \hat{x}^{2n}_a \rangle + \langle \hat{p}^{2n}_b \rangle \\
        \langle \hat{x}^{2n}_b \rangle + \langle \hat{p}^{2n}_a \rangle,
    \end{cases}
\end{equation}
depending on the combinations of signs, where subscript $a$ means averaging over the state with 
wave function $\psi_{\mathrm{min}}(u)$ and subscript $b$ means averaging over the state with wave 
function $\varphi(v)$. According to Eq.~\eqref{eq:x2p}, we have 
\begin{equation}
    \langle (\hat{x}_a \pm \hat{x}_b)^{2n} + (\hat{p}_a \mp \hat{p}_b)^{2n} = 
\frac{\Lambda^{(2n)}_{\mathrm{min}}}{2} + 
    2^n
    \begin{cases}
        \langle \hat{x}^{2n}_b \rangle \\
        \langle \hat{p}^{2n}_b \rangle,
    \end{cases}
\end{equation}
and thus the states \eqref{eq:Psi-bip} violate the inequalities \eqref{eq:xpn-sep} depending on 
whether the state $\varphi(v)$ exhibits higher-order squeezing, i.e., whether 
one of the quantities $\langle \hat{x}^{2n}_b \rangle$ and $\langle 
\hat{p}^{2n}_b \rangle$ is less than $\lambda^{(2n)}_{\mathrm{min}}/2$.

We now prove that for the case of $n>1$, all bipartite minimal states are 
entangled, not only those states with a higher-order squeezing component.
\begin{thrm}
Any pure state with a wave function of the form \eqref{eq:Psi-bip} is entangled, 
provided that $\psi_{\mathrm{min}}(x)$ is the single-partite minimal solution of 
Eq.~\eqref{eq:eve} with $n>1$ and $\varphi(y)$ is an arbitrary wave function.
\end{thrm}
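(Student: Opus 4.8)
The plan is to prove entanglement by showing that the kernel $\Psi_{\mathrm{min}}(x,y)$ of Eq.~\eqref{eq:Psi-bip} has Schmidt rank larger than one, which for a pure bipartite state is equivalent to inseparability. I would argue by contradiction, assuming a factorization $\Psi_{\mathrm{min}}(x,y)=f(x)g(y)$ and deriving a contradiction from the fact that, for $n>1$, the minimal wave function $\psi_{\mathrm{min}}$ has a node.

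The conceptual heart is the criterion that a nonvanishing, sufficiently smooth function of two variables factorizes as $f(x)g(y)$ exactly when $\partial_x\partial_y\log\Psi=0$. Writing $u=(x\pm y)/\sqrt2$ and $v=(x\mp y)/\sqrt2$, a direct computation for $\Psi_{\mathrm{min}}=\psi_{\mathrm{min}}(u)\varphi(v)$ gives
\begin{equation*}
\partial_x\partial_y\log\Psi_{\mathrm{min}}=\pm\tfrac12\bigl[(\log\psi_{\mathrm{min}})''(u)-(\log\varphi)''(v)\bigr].
\end{equation*}
Because $u$ and $v$ are independent coordinates, the vanishing of the right-hand side forces $(\log\psi_{\mathrm{min}})''(u)=(\log\varphi)''(v)$ to equal one and the same constant $c$; in particular $\log\psi_{\mathrm{min}}$ is quadratic and $\psi_{\mathrm{min}}=e^{cu^2/2+\dots}$ is a nowhere-vanishing Gaussian. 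This is impossible: for $n>1$, $\psi_{\mathrm{min}}$ solves Eq.~\eqref{eq:eve}, an equation with polynomial (entire) coefficients, so it is real-analytic on all of $\mathbb{R}$, and, as shown in Section~II, it takes negative values and therefore has a sign-changing zero, whereas $e^{cu^2/2+\dots}$ never vanishes. The contradiction establishes the theorem. Note that the hypothesis $n>1$ is essential exactly here: for $n=1$, $\psi_{\mathrm{min}}$ is the Gaussian vacuum and $\Psi_{\mathrm{min}}$ can be separable for a suitable $\varphi$.

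The step requiring care --- and what I expect to be the main obstacle --- is that $\varphi$ is only assumed normalizable, so $(\log\varphi)''$ need not exist, and $\log\Psi_{\mathrm{min}}$ is singular along the nodal lines of $\psi_{\mathrm{min}}$. To make the argument robust I would replace the logarithmic criterion by its sign version: for a genuine product $f(x)g(y)$ the sign is $\mathrm{sgn}\,f(x)\,\mathrm{sgn}\,g(y)$, so the product of $\Psi$ over the four corners of any axis-aligned rectangle with nonzero corners equals $+1$; equivalently, a rank-one function has only horizontal and vertical nodal lines. A node $\psi_{\mathrm{min}}(u_0)=0$ with a sign change, however, produces a diagonal nodal line $x\pm y=\sqrt2\,u_0$ across which $\Psi_{\mathrm{min}}$ flips sign. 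I would then exhibit a small axis-aligned rectangle straddling this diagonal, with its $v$-coordinates confined to a positive-measure set on which $\varphi$ has constant sign, so that the four corner signs are governed by $\psi_{\mathrm{min}}$ alone and their product is $-1$ --- contradicting separability. The only genuine bookkeeping is the placement of this rectangle so that $\varphi$ is nonzero (of constant sign) at the relevant points; since the zeros of the analytic function $\psi_{\mathrm{min}}$ are isolated while $\{\varphi\neq0\}$ has positive measure, this can be arranged, if necessary after smearing $\varphi$ against a smooth test function in $y$ to regularize.
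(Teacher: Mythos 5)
Your proof is essentially correct and reaches the result by a genuinely different mechanism than the paper. The paper also argues by contradiction from $\Psi_{\mathrm{min}}(x,y)=f(x)g(y)$, but instead of your rank-one/four-corner criterion it pins one argument at the origin to derive the functional identity
\begin{equation*}
\psi_{\mathrm{min}}(x)\,\psi_{\mathrm{min}}(y)\,\varphi(x)\,\varphi(-y)
=\psi_{\mathrm{min}}(0)\,\varphi(0)\,\psi_{\mathrm{min}}(x+y)\,\varphi(x-y),
\end{equation*}
then shows $\varphi(0)\neq 0$, deduces from the diagonal $y=-x$ (using the evenness of $\psi_{\mathrm{min}}$) that $\varphi\geq 0$ everywhere, and finally obtains a sign contradiction at $x=y=x_1/2$ where $\psi_{\mathrm{min}}(x_1)<0$. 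Your four-corner observation, $\Psi(x_1,y_1)\Psi(x_2,y_2)\Psi(x_1,y_2)\Psi(x_2,y_1)\geq 0$ for any product function, is exactly the same sign obstruction in disguise (the paper's identity is the $2\times 2$ rank-one determinant with two corners pinned at $0$), but your version buys robustness: it needs no symmetry of $\psi_{\mathrm{min}}$, no special role for the origin, and degrades gracefully to an almost-everywhere statement, which matters since the paper itself silently evaluates the merely-$L^2$ function $\varphi$ at individual points such as $0$ and $\pm x_1/2$. The price is the placement step you flag at the end; it does close, e.g.\ by working near a Lebesgue density point of $\{\varphi>0\}$ (or $\{\varphi<0\}$) and choosing the rectangle's side lengths small and its $u$-offsets inside a punctured neighborhood of a sign-changing zero $u_0$, where the analyticity of $\psi_{\mathrm{min}}$ guarantees a clean sign flip and isolated zeros, but that argument should be written out rather than gestured at. Note also that both your proof and the paper's rest on the same unproven-but-numerically-observed input, namely that $\psi_{\mathrm{min}}$ takes negative values for $n>1$ (the Perron--Frobenius positivity of ground states fails for the $2n$-th order operator, so negativity is plausible but not established analytically), and both implicitly take $\varphi$ real; the first display of your proposal (the $\partial_x\partial_y\log\Psi$ criterion) is dispensable motivation and is correctly discarded in favor of the sign version.
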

\begin{proof}
A bipartite pure state is separable only if it is factorizable, so we must prove 
that there are no functions $f(x)$ and $g(y)$ such that 
\begin{equation}\label{eq:fg}
    \psi_{\mathrm{min}}\left(\frac{x \pm y}{\sqrt{2}}\right) 
\varphi\left(\frac{x \mp y}{\sqrt{2}}\right) = f(x) g(y).
\end{equation}
First note that this theorem is not valid for arbitrary functions $\psi_{\mathrm{min}}(x)$ and 
$\varphi(y)$. In fact, if both $\psi_{\mathrm{min}}(x)$ and $\varphi(y)$ are wave functions of 
the vacuum state,
\begin{equation}
    \psi_{\mathrm{min}}(x) = \frac{1}{\sqrt[4]{\pi}} e^{-x^2/2}, \quad
    \varphi(y) = \frac{1}{\sqrt[4]{\pi}} e^{-y^2/2},
\end{equation}
then we have
\begin{equation}
    \psi_{\mathrm{min}}\left(\frac{x \pm y}{\sqrt{2}}\right) 
\varphi\left(\frac{x \mp y}{\sqrt{2}}\right) = \psi_{\mathrm{min}}(x) \varphi(y).
\end{equation}
The only property of $\psi_{\mathrm{min}}(x)$ that we need in order to 
establish the theorem is that it takes on negative values, which is the case 
for any $n>1$ according to our numerical analysis, and that 
$\psi_{\mathrm{min}}(x)$ does not have too many zeroes.

We consider only one combination of signs, the proof for the other one is 
similar. Let us assume that the relation \eqref{eq:fg} is valid for some 
functions $f(x)$ and $g(y)$. Then it is easy to derive the following identity:
\begin{equation}\label{eq:psi-phi}
\begin{split}
    \psi_{\mathrm{min}}(x) &\psi_{\mathrm{min}}(y) \varphi(x) \varphi(-y) \\
    &= \psi_{\mathrm{min}}(0) \varphi(0) \psi_{\mathrm{min}}(x+y) \varphi(x-y),
\end{split}
\end{equation}
which holds for all real numbers $x$ and $y$. If $\varphi(0) = 0$, then we must 
have $\psi_{\mathrm{min}}(x) \psi_{\mathrm{min}}(y) \varphi(x) \varphi(-y) = 0$ 
for all points $x$ and $y$. Since $\psi_{\mathrm{min}}(x)$ has at most 
countably many zeroes (we assume that $\psi_{\mathrm{min}}(x)$ behaves like the 
function \eqref{eq:BJ0} that accurately approximates it) and $\varphi(y)$ is 
normalized, there must by some $x_0$ such that both $\psi_{\mathrm{min}}(x_0)$ 
and $\varphi(x_0)$ are nonzero. Then we find that $\psi_{\mathrm{min}}(y) 
\varphi(-y) = 0$ for all $y$, and thus the norm of $\varphi(y)$ is zero --- a 
contradiction.

So, we have $\varphi(0) \not= 0$ and since the global sign of $\varphi$ is unimportant, we can 
assume that $\varphi(0) > 0$. Setting $y = -x$ in Eq.~\eqref{eq:psi-phi} and taking into 
account the symmetry of $\psi_{\mathrm{min}}(x)$, we obtain
\begin{equation}
    \psi_{\mathrm{min}}^2(x) \varphi^2(x) = \psi_{\mathrm{min}}^2(0) \varphi(0) \varphi(2x),
\end{equation}
from which we conclude that we must have $\varphi(x) \geq 0$ for all $x$, and 
so if $\varphi(x) < 0$ for some $x$ then $\Psi_{\mathrm{min}}(x, y)$ is not 
separable. If $\varphi(x) \geq 0$ for all $x$ then let us take a number 
$x_1$ such that $\psi_{\mathrm{min}}(x_1) < 0$. If we substitute $x = y = x_1/2$ 
into Eq.~\eqref{eq:psi-phi}, we have
\begin{equation}
\begin{split}
    0 &\leq \psi_{\mathrm{min}}^2(x_1/2) \varphi(x_1/2) \varphi(-x_1/2) \\
    &= \psi_{\mathrm{min}}(0) \varphi^2(0) \psi_{\mathrm{min}}(x_1) < 0,
\end{split}
\end{equation}
which is again a contradiction, proving the theorem. 
\end{proof}

\subsection{Approach two}

We now present another approach, which is better for getting the higher-order conditions, because in Approach one the 
second-order violations typically appear to come together with higher-order violations. In this new approach we derive 
conditions that can be violated only by non-Gaussian states. The inequality \eqref{eq:xp-2} can also be written in the 
following form:
\begin{equation}
    \langle (\hat{a}^\dagger \pm \hat{b})(\hat{a} \pm \hat{b}^\dagger) \rangle \geq 1.
\end{equation}
This form suggests another way to extend Eq.~\eqref{eq:xp-2}, by increasing the powers of the annihilation and creation 
operators. We first analyze the case of second powers (so the total product will be fourth-order) and derive a state 
that perfectly violates the resulting condition.
\begin{thrm}
For any bipartite separable state the following inequalities are valid:
\begin{subnumcases}{\label{eq:bS} \langle (\hat{a}^{\dagger 2} \pm \hat{b}^2)(\hat{a}^2 \pm 
\hat{b}^{\dagger 2}) \rangle \geq}
        2 & \text{for sep. states} \label{eq:bS:1} \\
        0 & \text{for all states} \label{eq:bS:2}
\end{subnumcases}
The inequality \eqref{eq:bS:2} is always strict and tight, i.e., the left-hand side can be arbitrarily close to zero 
(but never equal to zero).
\end{thrm}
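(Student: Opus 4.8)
The quantity to be bounded is $\langle\hat{O}^\dagger\hat{O}\rangle$ with $\hat{O}=\hat{a}^2\pm\hat{b}^{\dagger 2}$, so inequality \eqref{eq:bS:2} is immediate: $\hat{O}^\dagger\hat{O}$ is a positive operator and its expectation is nonnegative in every state. The real work lies in the three remaining claims, which I would treat separately: the separability bound \eqref{eq:bS:1}, the strictness of \eqref{eq:bS:2} (the value is never exactly $0$), and its tightness (the value can be pushed arbitrarily close to $0$). All of what follows works for both sign choices; only the relative sign in the recursion below changes, not the magnitudes.

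For \eqref{eq:bS:1} the plan is to reduce to pure product states: every separable state is a convex mixture of pure products $|\psi\rangle_a|\phi\rangle_b$ and $\langle\hat{O}^\dagger\hat{O}\rangle$ is linear in the state, so it suffices to bound it on a single product. There it expands as $\langle\hat{a}^{\dagger 2}\hat{a}^2\rangle_\psi+\langle\hat{b}^2\hat{b}^{\dagger 2}\rangle_\phi\pm2\re(\langle\hat{a}^2\rangle_\psi\langle\hat{b}^2\rangle_\phi)$. Writing $A=\langle\hat{a}^{\dagger 2}\hat{a}^2\rangle_\psi$ and $B=\langle\hat{b}^{\dagger 2}\hat{b}^2\rangle_\phi$, the single-mode Cauchy--Schwarz inequalities $|\langle\hat{a}^2\rangle_\psi|^2\le A$ and $|\langle\hat{b}^2\rangle_\phi|^2\le B$ control the cross term, while the commutator identity $\hat{b}^2\hat{b}^{\dagger 2}=\hat{b}^{\dagger 2}\hat{b}^2+4\hat{b}^\dagger\hat{b}+2$ gives $\langle\hat{b}^2\hat{b}^{\dagger 2}\rangle_\phi\ge B+2$. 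Combining these yields $\langle\hat{O}^\dagger\hat{O}\rangle\ge A+B+2-2\sqrt{AB}=(\sqrt{A}-\sqrt{B})^2+2\ge2$, with equality at the vacuum. This part is routine.

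Strictness I would get by showing $\hat{O}$ has trivial kernel. The structural key is that both $\hat{a}^2$ and $\hat{b}^{\dagger 2}$ lower $\hat{L}\equiv\hat{a}^\dagger\hat{a}-\hat{b}^\dagger\hat{b}$ by $2$, so $[\hat{L},\hat{O}]=-2\hat{O}$ and $\hat{O}^\dagger\hat{O}$ is block diagonal in the eigensectors of $\hat{L}$. In the sector $\hat{L}=\ell$, expanding $|\Psi\rangle=\sum_n c_n|n+\ell,n\rangle$ turns $\hat{O}|\Psi\rangle=0$ into the three-term recursion $c_n\sqrt{(n+\ell)(n+\ell-1)}=\mp c_{n-2}\sqrt{n(n-1)}$. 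I would then check that for $|\ell|\ge2$ the boundary relation forces the seed coefficient to vanish, whereas for $\ell\in\{-1,0,1\}$ the surviving solutions have $|c_n|^2$ decaying only like $n^{-|\ell|}$ and are thus not square-summable. Hence no normalizable null vector exists and $\langle\hat{O}^\dagger\hat{O}\rangle>0$ always.

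The tightness is where I expect the genuine obstacle, because it asserts that $0$ lies at the bottom of the \emph{continuous} (approximate) spectrum of $\hat{O}^\dagger\hat{O}$ rather than being an eigenvalue. Naively truncating the non-normalizable $\ell=1$ zero-mode fails: a sharp cutoff at level $N$ dumps all the weight into one boundary term of size $\sim N^2|c_N|^2$, which diverges against the only logarithmically growing norm. The plan is instead to stay in the $\ell=1$ sector, reduce $\|\hat{O}\Psi\|^2$ (with an appropriately (anti)alternating amplitude ansatz, which is the optimal phase choice) to a discrete Hardy-type functional whose continuum limit is $4\int_0^\infty t^2 w'(t)^2\,dt-\int_0^\infty w(t)^2\,dt$, and invoke the Hardy inequality $4\int_0^\infty t^2 w'^2\,dt\ge\int_0^\infty w^2\,dt$ to see that this functional has infimum $0$, saturated only by the critical profile $w(t)\sim t^{-1/2}$ (precisely the non-normalizable zero-mode). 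Realizing this by a family of normalized states built from a smoothly and widely cut-off $t^{-1/2}$ envelope should give $\langle\hat{O}^\dagger\hat{O}\rangle=O(1/\ln L)\to0$ as the cutoff $L\to\infty$, while the value never reaches $0$. The delicate point throughout is controlling the cutoff contribution, which is exactly why a smooth, broad envelope is essential and a sharp truncation is not.
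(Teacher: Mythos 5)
Your proposal is correct in substance, but it reaches the three nontrivial claims by routes that only partly coincide with the paper's. For the separability bound \eqref{eq:bS:1} you argue directly on pure products via Cauchy--Schwarz, $|\langle\hat{a}^2\rangle|^2\le\langle\hat{a}^{\dagger2}\hat{a}^2\rangle$, plus the commutator identity $\hat{b}^2\hat{b}^{\dagger2}=\hat{b}^{\dagger2}\hat{b}^2+4\hat{b}^\dagger\hat{b}+2$, landing on $(\sqrt{A}-\sqrt{B})^2+2\ge2$; the paper instead evaluates the same expression on the partially transposed state and invokes PPT of separable states. Both are complete and correct; yours is more elementary and self-contained, the paper's generalizes mechanically to the $n$-th-order conditions \eqref{eq:nS}. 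For strictness your sector decomposition under $\hat{L}=\hat{a}^\dagger\hat{a}-\hat{b}^\dagger\hat{b}$ and the three-term recursion is exactly the paper's argument (the paper's surviving chain $c_{2j+1,2j}\propto(2j+1)^{-1/2}$ is your $\ell=1$ even chain, killed by the divergence of $\sum(2j+1)^{-1}$); note only the small slip that the surviving formal solutions have $|c_n|^2\sim n^{-\ell}$, not $n^{-|\ell|}$, so for $\ell=-1$ the coefficients \emph{grow} like $n$ --- the non-normalizability conclusion is unaffected. The real divergence is in the tightness claim: you propose a variational/Hardy-inequality argument with a smoothly cut-off $t^{-1/2}$ envelope, whereas the paper simply exhibits the explicit family \eqref{eq:psi2}, $|\psi^{(2)}_\xi\rangle\propto\sum_k\xi^k(2k+1)^{-1/2}|2k+1,2k\rangle$, and computes $\langle(\hat{a}^{\dagger2}\pm\hat{b}^2)(\hat{a}^2\pm\hat{b}^{\dagger2})\rangle=2\xi/((1\mp\xi)^2\atanh\xi)\to0$ as $\xi\to\mp1$. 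That state is nothing but a geometric damping of your critical profile (in your variables $w_j=\xi^j$), so the two ideas are the same, but the closed form replaces all of the delicate cutoff bookkeeping you anticipate with a two-line computation. Your tightness section is the one place where the argument is a plan rather than a proof --- the discrete-to-continuum reduction and the control of the envelope terms are asserted, not carried out --- so if you pursue your route you must actually exhibit a normalized family and bound $\|\hat{O}\Psi\|^2$ (e.g.\ a logarithmic ramp gives $O(1/\ln^2 L)$ against a norm $\sim\tfrac12\ln L$, which indeed works); alternatively, adopting the paper's explicit state closes this gap immediately.
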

\begin{proof}
We first prove the inequality \eqref{eq:bS:1}. For a partially transposed state we have
\begin{equation}
\begin{split}
    \langle (\hat{a}^{\dagger 2} &\pm \hat{b}^2)(\hat{a}^2 \pm 
\hat{b}^{\dagger 2}) \rangle^{\mathrm{PT}} \\
&= \langle (\hat{a}^{\dagger 2} \pm \hat{b}^{\dagger 2})(\hat{a}^2 
\pm \hat{b}^2) + 4\hat{b}^\dagger \hat{b} + 2\rangle \geq 2,
\end{split}
\end{equation}
and thus obtain the desired relation in the same way as we derived the inequality \eqref{eq:xpn-sep}. The lower bound is 
attained, for example, for the bipartite two-mode vacuum state.

Now we prove that the left-hand side of the inequalities \eqref{eq:bS} can never be equal to zero (for a physical 
state). It is enough to prove this statement for pure states only. If for a pure state $|\psi\rangle = 
\sum^{+\infty}_{k,l=0} c_{k,l} |k,l\rangle$ the left-hand side of \eqref{eq:bS} is zero then we must have $(\hat{a}^2 
\pm \hat{b}^{\dagger 2}) |\psi\rangle = 0$. From this we get the following relation between the coefficients of the 
state:
\begin{equation}
    \sqrt{(k+1)(k+2)} c_{k+2, l} = \mp \sqrt{l(l-1)} c_{k, l-2},
\end{equation}
which holds for all $k,l \geq 0$. We immediately find that $c_{k0} = c_{k1} = 0$ for all $k \geq 2$. The relation above 
can be rewritten as
\begin{equation}
    c_{k+2, l+2} = \mp \sqrt{\frac{(l+1)(l+2)}{(k+1)(k+2)}} c_{kl},
\end{equation}
for all $k, l \geq 0$. Applying it several times we obtain the equality
\begin{equation}\label{eq:c2}
    c_{k+2j, l+2j} = (\mp 1)^j \sqrt{\frac{(l+2j)!}{(k+2j)!} \frac{k!}{l!}} 
c_{kl},
\end{equation}
for all $j \geq 0$. Since the state $|\psi\rangle$ is normalized, at least one of the coefficients $c_{kl}$ is non-zero, 
let us say, $c_{k_0l_0} \not= 0$. If $l_0 \geq k_0$ then $c_{k_0+2j, l_0+2j}$ does not tend to zero as $j \to +\infty$, 
and thus $\sum^{+\infty}_{j=0} |c_{k_0+2j, l_0+2j}|^2$ cannot converge. So, we must have $k_0 > l_0$. Then, using 
Eq.~\eqref{eq:c2}, we also have $c_{k_0-2, l_0-2} \not= 0$ (provided that $k_0-2, l_0-2 \geq 0$). Repeating the process 
of subtracting 2 from both indices several times, we either arrive at $c_{10}$, or $c_{k0}$ or $c_{k1}$ with $k \geq 2$. 
As we have already seen, in the latter case we must have $c_{k0} = c_{k1} = 0$, so the only possibility is that $k_0 = 
2j_0 + 1$ and $l_0 = 2 j_0$ for some $j_0 \geq 0$. We conclude that $c_{10} \not= 0$ and from Eq.~\eqref{eq:c2} we get
\begin{equation}
    c_{2j+1, 2j} = (\mp 1)^j \frac{1}{\sqrt{2j+1}} c_{10},
\end{equation}
for all $j \geq 0$. Then the following series must converge:
\begin{equation}
    \sum^{+\infty}_{j=0} |c_{2j+1, 2j}|^2 = |c_{10}|^2 \sum^{+\infty}_{j=0} \frac{1}{2j+1},
\end{equation}
which, however, is well-known to diverge. We have finally arrived at a contradiction: from the assumption $(\hat{a}^2 
\pm \hat{b}^{\dagger 2}) |\psi\rangle = 0$ we find that all coefficients $c_{kl}$ must be equal to zero, which 
contradicts the normalization of $|\psi\rangle$.

\begin{figure*}
\includegraphics{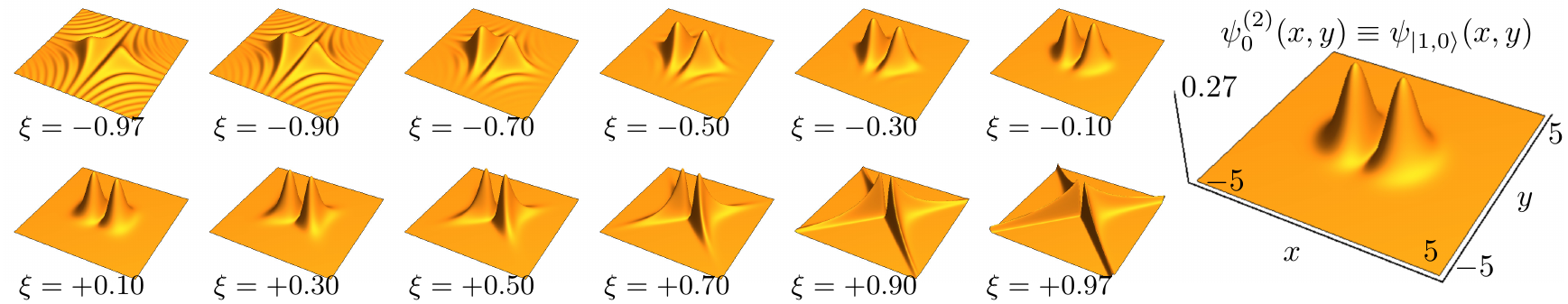}
\caption{The wave function $\psi^{(2)}_\xi(x,y)$ of the state \eqref{eq:psi2} for different values of $\xi$. The scale 
of each axis is shown on the right.}
\end{figure*}

To show that the left-hand side of the inequalities \eqref{eq:bS} can become arbitrarily small, consider the following 
state:
\begin{equation}\label{eq:psi2}
    |\psi^{(2)}_\xi\rangle = \sqrt{\frac{\xi}{\atanh\xi}} \sum^{+\infty}_{k=0} 
    \frac{\xi^k}{\sqrt{2k+1}} |2k+1, 2k\rangle.
\end{equation}
This state is defined for $|\xi| < 1$. For $\xi = 0$ it is just the factorizable state $|1, 0\rangle$. It is 
straightforward to compute the quantity in question on the state \eqref{eq:psi2}. We have
\begin{equation}
    \langle (\hat{a}^{\dagger 2} \pm \hat{b}^2)(\hat{a}^2 \pm \hat{b}^{\dagger 2}) \rangle = 
    \frac{2\xi}{(1\mp\xi)^2 \atanh\xi}.
\end{equation}
We see that $\langle (\hat{a}^{\dagger 2} \pm \hat{b}^2)(\hat{a}^2 \pm \hat{b}^{\dagger 2}) \rangle \to 0$ when $\xi \to 
\mp 1$ and thus, for $\xi$ sufficiently close to $\pm 1$, the left-hand side of Eq.~\eqref{eq:bS} can be made 
arbitrarily close to zero. Note that the inequality $\langle (\hat{a}^{\dagger 2} + \hat{b}^2)(\hat{a}^2 + 
\hat{b}^{\dagger 2}) \rangle < 2$ holds true for all $-1 < \xi < 0$ and $\langle (\hat{a}^{\dagger 2} - 
\hat{b}^2)(\hat{a}^2 - \hat{b}^{\dagger 2}) \rangle < 2$ holds true for $0 < \xi < 1$.
\end{proof}

The wave function of the state \eqref{eq:psi2} can be found explicitly. Even though this state looks a bit similar to 
the two-mode squeezed vacuum state \eqref{eq:sq-vac}, it is actually non-Gaussian. The exact form of its wave function 
and the full details of the derivation are given in Appendix C. Another state that perfectly violates inequality 
\eqref{eq:bS} is presented in Appendix D.

Since the inequalities \eqref{eq:bS} are based on the commutator properties of the creation and annihilation operators, 
arbitrary unitary transformations $\hat{a} \to \hat{U}^\dagger \hat{a} \hat{U}$, $\hat{b} \to \hat{V}^\dagger \hat{b} 
\hat{V}$ preserve these inequalities. As a special case, we can write the same inequalities \eqref{eq:bS} for $\Delta 
\hat{a}$ and $\Delta \hat{b}$ instead of $\hat{a}$ and $\hat{b}$ (where $\Delta \hat{A} = \hat{A} - \langle \hat{A} 
\rangle$). Moreover, as we show now, these inequalities can never be violated by bipartite Gaussian states. 

\begin{thrm}\label{thrm:bip}
All bipartite separable states and all bipartite (including inseparable) Gaussian states satisfy the following 
inequality:
\begin{equation}\label{eq:dbS} 
    \langle ((\Delta\hat{a})^{\dagger 2} \pm (\Delta\hat{b})^2)((\Delta\hat{a})^2 \pm 
    (\Delta\hat{b})^{\dagger 2}) \rangle \geq 2.
\end{equation}
The left-hand side of this inequality can be arbitrarily close to zero.
\end{thrm}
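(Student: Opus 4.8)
The plan is to treat the two assertions separately. The separability claim is already within reach: the preceding theorem gives $\langle(\hat{a}^{\dagger 2}\pm\hat{b}^2)(\hat{a}^2\pm\hat{b}^{\dagger 2})\rangle\geq 2$ for every separable state, and since $\Delta\hat{a}$ and $\Delta\hat{b}$ arise from $\hat{a}$ and $\hat{b}$ by the local displacements $D_a(-\langle\hat{a}\rangle)$ and $D_b(-\langle\hat{b}\rangle)$ --- local unitaries that preserve both separability and the canonical commutation relations --- the remark preceding Eq.~\eqref{eq:dbS} promotes the bound to the centered operators. So I would dispatch the separable case in one line and spend the real effort on the new content: that every Gaussian state, entangled or not, satisfies Eq.~\eqref{eq:dbS}.

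For the Gaussian case the decisive structural fact is that the expectation is built solely from the centered operators $\Delta\hat{a},\Delta\hat{b}$. Since a Gaussian state has a Gaussian characteristic function its cumulants above second order vanish, so the centered fourth moments factorize into products of second moments by Wick's theorem. I would introduce the covariance data $N_a=\langle(\Delta\hat{a})^\dagger\Delta\hat{a}\rangle$, $M_a=\langle(\Delta\hat{a})^2\rangle$ (and $N_b,M_b$ analogously) together with the single cross-correlation $C=\langle\Delta\hat{a}\,\Delta\hat{b}\rangle$, and contract the four terms $(\Delta\hat{a})^{\dagger 2}(\Delta\hat{a})^2$, $(\Delta\hat{b})^2(\Delta\hat{b})^{\dagger 2}$, $(\Delta\hat{a})^{\dagger 2}(\Delta\hat{b})^{\dagger 2}$ and $(\Delta\hat{b})^2(\Delta\hat{a})^2$ one by one. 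Because every cross term carries two creation or two annihilation operators, no correlation of the type $\langle(\Delta\hat{a})^\dagger\Delta\hat{b}\rangle$ survives, and the left-hand side of Eq.~\eqref{eq:dbS} reduces to
\begin{equation}
    |M_a|^2+|M_b|^2+2N_a^2+2(N_b+1)^2\pm 2\re(M_aM_b)\pm 4\re(C^2).
\end{equation}

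The remaining task is to show this never drops below $2$. For either sign I would bound the two sign-dependent terms by their worst case, $\pm 2\re(M_aM_b)\geq -2|M_a||M_b|$ and $\pm 4\re(C^2)\geq -4|C|^2$, which turns the $M$-terms into the perfect square $(|M_a|-|M_b|)^2$. The crucial input is a Cauchy-Schwarz bound on $C$ taken in the correct order, $|C|^2\leq\langle\Delta\hat{a}\,(\Delta\hat{a})^\dagger\rangle\langle(\Delta\hat{b})^\dagger\Delta\hat{b}\rangle=(N_a+1)N_b$, which holds for any physical state and needs no Gaussianity. Writing $2(N_b+1)^2=2N_b^2+4N_b+2$ and substituting, everything beyond the constant $2$ reorganizes into $2(N_a-N_b)^2+(|M_a|-|M_b|)^2\geq 0$, giving Eq.~\eqref{eq:dbS}.

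I expect the main obstacle to be exactly this choice of Cauchy-Schwarz. The asymmetric ordering of the two modes --- normal order $(\Delta\hat{a})^{\dagger 2}(\Delta\hat{a})^2$ on one side, anti-normal order $(\Delta\hat{b})^2(\Delta\hat{b})^{\dagger 2}$ on the other --- is what supplies the $+1$ in $(N_a+1)N_b$ and lets the cross term be absorbed; the symmetric-looking alternative $|C|^2\leq N_a(N_b+1)$ would leave an uncancelled $-4(N_a-N_b)$ and fail to close the argument. Finally, the claim that the left-hand side can be made arbitrarily close to zero costs nothing extra: the non-Gaussian family \eqref{eq:psi2} of the previous theorem has vanishing first moments, so on it $\Delta\hat{a}=\hat{a}$ and $\Delta\hat{b}=\hat{b}$, and its value already tends to zero as $\xi\to\pm 1$.
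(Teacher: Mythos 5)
Your proof is correct, and for the Gaussian half it takes a genuinely different route from the paper. You and the paper agree on the two easy parts: the separable case follows from the preceding theorem because local displacements preserve separability and the commutators, and the state \eqref{eq:psi2} (which has vanishing first moments) shows the bound $0$ is approached. For the Gaussian case, however, the paper works entirely in the quadrature picture: it writes the characteristic function \eqref{eq:chiG0}, uses local phase rotations to symmetrize the covariance matrix ($\sigma_{11}=\sigma_{33}$, $\sigma_{22}=\sigma_{44}$), expresses the fourth moments in terms of the $\sigma_{ij}$, and then extracts the needed inequality $A\geq u^2+v^2$ from the physicality condition $\tilde{\Sigma}\geq 0$ via minors of the auxiliary positive matrix $\overline{\Sigma}=\tilde{\Sigma}+P^\dagger\tilde{\Sigma}P$. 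You instead apply the quantum Wick/Isserlis factorization directly to the mode operators, reducing the left-hand side to $|M_a|^2+|M_b|^2+2N_a^2+2(N_b+1)^2\pm 2\re(M_aM_b)\pm 4\re(C^2)$ (I checked the contractions; they are right, and indeed no $\langle(\Delta\hat{a})^\dagger\Delta\hat{b}\rangle$ term can appear), and then close with the single Cauchy--Schwarz bound $|C|^2\leq(N_a+1)N_b$, which yields $(|M_a|-|M_b|)^2+2(N_a-N_b)^2+2\geq 2$. Your route is shorter and arguably more transparent: it never invokes the covariance-matrix physicality condition \eqref{eq:sigma2} (the only non-Gaussian input is Cauchy--Schwarz, valid for every state), it needs no phase-rotation normal form, and it makes clear that the result holds for any state whose centered fourth moments Wick-factorize. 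What the paper's covariance-matrix computation buys in exchange is that it stays within the standard symplectic formalism and exhibits explicitly which entries of $\Sigma$ control the bound. Your diagnosis of the delicate point is also accurate: the alternative ordering $|C|^2\leq N_a(N_b+1)$ would only give $2(N_a-N_b-1)^2\geq 0$ and lose the constant $2$; the asymmetric normal/anti-normal ordering of the two modes in \eqref{eq:dbS} is precisely what makes the correct choice available.
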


This theorem, which we prove in Appendix D, can be generalized for arbitrary orders, though it is more difficult to 
establish similar results analytically. Below we formulate the general theorem and give an example of a state that 
violates the corresponding separability condition. It seems that a stronger statement (a full analogue of the preceding 
theorem) is valid, but we are not able to present a strict mathematical proof of it.

\begin{thrm}
For any bipartite separable state and for any positive integer $n$ the following inequalities are valid:
\begin{subequations} \label{eq:nS}
\begin{empheq}[left={\langle (\hat{a}^{\dagger n} \pm \hat{b}^n)(\hat{a}^n \pm 
\hat{b}^{\dagger n}) \rangle \geq}\empheqlbrace]{alignat=2}
  & n! && \quad \text{for sep. states} \label{eq:nS:1} \\
  & 0 && \quad \text{for all states} \label{eq:nS:2}
\end{empheq}
\end{subequations}
There are states which violate the inequality \eqref{eq:nS:1} at least by a factor of 4.
\end{thrm}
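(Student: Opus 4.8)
The plan is to handle the three claims in turn, reusing the partial-transposition argument of the fourth-order theorem for the separability bound \eqref{eq:nS:1} and generalizing the family $|\psi^{(2)}_\xi\rangle$ for the violation. The bound \eqref{eq:nS:2} for all states is immediate: setting $\hat{D} = \hat{a}^n \pm \hat{b}^{\dagger n}$ we have $\hat{D}^\dagger = \hat{a}^{\dagger n} \pm \hat{b}^n$, so the left-hand side is $\langle\hat{D}^\dagger\hat{D}\rangle = \|\hat{D}|\psi\rangle\|^2 \geq 0$ for every pure state and hence, by convexity, for all states.

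For \eqref{eq:nS:1} I would repeat the partial-transposition step used for $n=2$. Partial transposition on mode $b$ exchanges $\hat{b}^n \leftrightarrow \hat{b}^{\dagger n}$ and leaves $\hat{b}^n\hat{b}^{\dagger n}$ unchanged, so a short rearrangement gives the operator identity
\[
  \langle (\hat{a}^{\dagger n} \pm \hat{b}^n)(\hat{a}^n \pm \hat{b}^{\dagger n})\rangle^{\mathrm{PT}} = \langle (\hat{a}^{\dagger n} \pm \hat{b}^{\dagger n})(\hat{a}^n \pm \hat{b}^n) + [\hat{b}^n, \hat{b}^{\dagger n}]\rangle,
\]
the direct analogue of the $n=2$ relation. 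The first term is of the form $\hat{C}^\dagger\hat{C} \geq 0$, so it remains to show $[\hat{b}^n, \hat{b}^{\dagger n}] \geq n!\,\mathbb{1}$. This commutator is diagonal in the Fock basis with eigenvalues $f(m) = (m+n)!/m! - m!/(m-n)!$ (the second term absent for $m < n$), and I would prove $\min_m f(m) = f(0) = n!$: for $0 \leq m < n$ one has $f(m) = (m+1)\cdots(m+n) \geq n!$ directly, while for $m \geq n$ one writes $f(m) = g(m) - g(m-n)$ with $g(m) = (m+1)\cdots(m+n)$ and uses convexity of $g$ to conclude that $f$ is nondecreasing there with $f(n) = (2n)!/n! - n! \geq n!$. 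Since a separable state is PPT, $\langle\cdot\rangle^{\mathrm{PT}}$ is a legitimate expectation value and the identity yields the bound $n!$.

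The violation requires a genuine construction. I would generalize $|\psi^{(2)}_\xi\rangle$ to the ladder state $|\psi^{(n)}_\xi\rangle \propto \sum_{j\geq0} (\mp1)^j\,\xi^j\,\sqrt{(nj)!/(nj+n-1)!}\;|n(j+1)-1,\,nj\rangle$, i.e.\ with photon-number difference $d = n-1$ between the modes. With the coefficients fixed by the null-vector recursion of $\hat{D}$, the action $\hat{D}|\psi^{(n)}_\xi\rangle$ telescopes at $\xi=1$, and the convergence factor $\xi^j$ leaves a residual proportional to $(1-\xi)$; evaluating the two resulting sums, the decisive simplification for $d=n-1$ is that the numerator sum collapses, giving the closed form $\langle\hat{O}\rangle = n/[(1+\xi)^2 S_0]$ with $S_0 = \sum_{j\geq0}\xi^{2j}(nj)!/(nj+n-1)!$. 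Letting $\xi\to1$ gives $\langle\hat{O}\rangle \to n/[4\,S_0(1)]$, and since the $j=0$ term already forces $S_0(1) > 1/(n-1)!$, this limit lies strictly below $n!/4$, which is the claimed factor-of-four violation (the cases $n=1,2$ being covered by the perfect violations established earlier). The main obstacle is pinning down this ansatz: the choice $d=n-1$ is essential, because $d<n-1$ makes the limit diverge while $d=n$ introduces a spurious boundary term, and the real work is reducing the expectation to closed form and extracting the clean estimate $S_0(1) > 1/(n-1)!$.
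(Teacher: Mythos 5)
Your proposal is correct and follows essentially the same route as the paper: the violating state you construct is exactly the paper's $|\psi^{(n)}_\xi\rangle$ (your coefficient $\sqrt{(nj)!/(nj+n-1)!}$ equals $1/\sqrt{\prod_{j'=1}^{n-1}(nk+j')}$, up to the sign convention $\xi\to\mp\xi$), your closed form $n/[(1+\xi)^2S_0]$ matches $N_n^2(\xi)\,n/(1\mp\xi)^2$, and the estimate $S_0(1)>1/(n-1)!$ from the $j=0$ term is precisely the paper's $N_n^2<(n-1)!$ bound giving the factor-of-4 violation. Your partial-transposition argument with the operator bound $[\hat{b}^n,\hat{b}^{\dagger n}]\geq n!\,\mathbb{1}$ (minimized at the Fock vacuum) is a correct and welcome elaboration of the separability bound \eqref{eq:nS:1}, which the paper only carries out explicitly for $n=2$.
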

\begin{proof} Consider the following state:
\begin{equation}
    |\psi^{(n)}_\xi\rangle = N_n(\xi) \sum^{+\infty}_{k=0} \frac{\xi^k}{\sqrt{\prod^{n-1}_{j=1}(nk+j)}} 
    |nk+n-1, nk\rangle. \nonumber
\end{equation}
We choose the normalization such that $N_n(\xi) > 0$. Note that for the case $n=2$ this definition coincides with 
Eq.~\eqref{eq:psi2}. The normalization factor is determined from the relation
\begin{equation}\label{eq:N-n}
    N^2_n(\xi) \sum^{+\infty}_{k=0} \frac{\xi^{2k}}{\prod^{n-1}_{j=1}(nk+j)} = 1. 
\end{equation}
Since $n \geq 3$, the series in this expression converges for $|\xi| \leq 1$, so the state now  is defined for the 
end-points $\xi=\pm 1$ of the interval of definition. From the relation above it follows that $N_n = N_n(+1) = N_n(-1)$ 
is a well defined number.

Computing the left-hand side of Eq.~\eqref{eq:nS} for the state $|\psi^{(n)}_\xi\rangle$, we obtain
\begin{equation}
    \langle (\hat{a}^{\dagger n} \pm \hat{b}^n)(\hat{a}^n \pm \hat{b}^{\dagger n}) \rangle = 
    N^2_n(\xi) \frac{n}{(1 \mp \xi)^2}.
\end{equation}
When $\xi \to \mp 1$, this quantity tends to 
\begin{equation}
    \lim_{\xi \to \mp 1} \langle (\hat{a}^{\dagger n} \pm \hat{b}^n)(\hat{a}^n \pm \hat{b}^{\dagger n}) \rangle 
    = N^2_n \frac{n}{4}.
\end{equation}
From the definition \eqref{eq:N-n} we have the following inequality for the norm:
\begin{equation}
    N^2_n = \left(\frac{1}{(n-1)!}+\ldots\right)^{-1} < (n-1)!.
\end{equation}
Using this inequality we can estimate the limiting values of the left-hand side of Eq.~\eqref{eq:nS},
\begin{equation}
    N^2_n \frac{n}{4} < \frac{n!}{4}.
\end{equation}
This shows that in the limit $\xi \to \pm 1$ the state $|\psi^{(n)}_\xi\rangle$ violates the inequality \eqref{eq:nS:1} 
at least by a factor of 4.
\end{proof}

Our conjecture is that the inequality \eqref{eq:nS:1} cannot be violated by Gaussian states (including all inseparable 
ones) and the inequality \eqref{eq:nS:2} is tight. We verified this by numerical computations, but we could not find 
rigorous proofs. 

\section{Conclusion}

Considering the well-known Duan separability condition in terms of second moments of quadrature operators, we presented 
a generalization of such conditions to higher orders in two different ways. One is expressed in terms of higher-order 
correlations between bipartite quadrature operators, which is very intuitive, resembling EPR-type correlations for 
higher orders, and which is rather convenient for the use in experiments. The other way, described by creation and 
annihilation operators and their higher-order combinations, leads to truly higher-order conditions, but it is more 
difficult to test experimentally. The resulting criteria in this case represent a true hierarchy, where certain 
higher-order inseparability conditions cannot be fulfilled by any entangled Gaussian states. Nonetheless, entangled 
non-Gaussian states exist that satisfy such conditions. Our approach can open new directions in the study of 
higher-order entanglement phenomena.

\appendix

\section{Computation of eigenvalues}

The minimal value of $\langle\hat{x}^{2n} + \hat{p}^{2n}\rangle$ is computed numerically as the minimal eigenvalue of 
the truncated matrix $M_{2n}$. This eigenvalue quickly stabilizes as the order $N$ of truncation grows, so the truncated 
matrix even for small values of $N \approx 10^2$ gives a very precise result. But since the solutions for $n>1$ are 
nearly indistinguishable, it makes sense to compute the eigenvalues and, more importantly, the corresponding 
eigenvectors as precise as possible. 

To do it, the eigenvalues and eigenvectors are computed with Intel Math Kernel 
Library\footnote{\texttt{https://software.intel.com/en-us/intel-mkl}} in two steps. First, the truncated matrices 
$M_{2n}$ are treated as dense matrices and their minimal eigenvalues are computed with the routine \texttt{syevx}. The 
order of truncation $N \approx 10^3$ is typically used for this step. From the structure of the matrices $M_{2n}$, 
discussed in the main part of this work, it follows that these matrices are sparse and their nonzero elements constitute 
a small fraction of the total size of the matrices. This can be illustrated by Eq.~\eqref{eq:M4}, for example. For such 
matrices using a sparse solver is much more space efficient then using the standard dense solver. We use the sparse 
eigensolver \texttt{dfeast\_scsrev}, which is based on the FEAST algorithm proposed in Ref.~\cite{PhysRevB.79.115112}. 
The results obtained in the previous step with dense matrices are used as initial conditions for this sparse solver.

\begin{table*}[ht]
\begin{tabular}{l|D{.}{.}{4.8} D{.}{.}{4.8} 
D{.}{.}{4.8}D{.}{.}{4.8}D{.}{.}{4.8}D{.}{.}{4.8}D{.}{.}{5.8}}
\toprule[0.6pt] \\[-3mm]
$n$ & \multicolumn{1}{c}{\hfil $\lambda^{(n)}_{\mathrm{min}}$} & \multicolumn{1}{c}{\hfil $\psi(0)$} & 
\multicolumn{1}{c}{\hfil $\psi^{\prime\prime}(0)$} & \multicolumn{1}{c}{\hfil $\psi^{(4)}(0)$} & 
\multicolumn{1}{c}{\hfil $\psi^{(6)}(0)$} & \multicolumn{1}{c}{\hfil $\psi^{(8)}(0)$} & 
\multicolumn{1}{c}{\hfil $\psi^{(10)}(0)$} \\[1mm]
\midrule[0.4pt] \\[-2mm]
1 & 1 & 0.75112554 & \multicolumn{1}{c}{\hfil *} & \multicolumn{1}{c}{\hfil *} & \multicolumn{1}{c}{\hfil *} & 
\multicolumn{1}{c}{\hfil *} & \multicolumn{1}{c}{\hfil *} \\[1mm]
2 & 1.39672823 & 0.73253810 & -0.59978918 & \multicolumn{1}{c}{\hfil *} & \multicolumn{1}{c}{\hfil *} & 
\multicolumn{1}{c}{\hfil *} & \multicolumn{1}{c}{\hfil *} \\[1mm]
3 & 2.95304540 & 0.73255327 & -0.60402445 & 1.10905904 & \multicolumn{1}{c}{\hfil *} & 
\multicolumn{1}{c}{\hfil *} & \multicolumn{1}{c}{\hfil *} \\[1mm]
4 & 8.28911703 & 0.73460748 & -0.61951231 & 1.22274755 & -2.94050192 & \multicolumn{1}{c}{\hfil *} & 
\multicolumn{1}{c}{\hfil *} \\[1mm]
5 & 28.97408955 & 0.73662780 & -0.63430030 & 1.32592056 & -3.61002601 & 9.96484721 & \multicolumn{1}{c}{\hfil *} \\[1mm]
6 & 121.21680669 & 0.73832570 & -0.64684279 & 1.41413494 & -4.19672348 & 13.62188458 & -40.89217482 \\[2mm]
\bottomrule[0.6pt]
\end{tabular}
\caption{The minimal eigenvalues and the corresponding values for the derivatives.}\label{tbl:2}
\end{table*}

The use of the dense solver is straightforward. The sparse solver is slightly more tricky to use because it requires 
more efforts to prepare the matrices in the format required by this solver. Nevertheless, these extra efforts pay off 
since the order of truncation can be increased to $N \approx 10^6$ on the same hardware. Unfortunately, further increase 
of $N$ leads to unstable behavior of the solver --- the solution starts to depend on the number of cores used for the 
computation, so one needs an independent way to verify the results. One way to do it is to use these results to set the 
initial conditions $\psi(0)$, $\psi'(0)$, \ldots, for the ODE \eqref{eq:eve}, solve it and compare the two solutions.

Having the eigenvectors $(c_0, c_1, \ldots)$, we can compute the corresponding wave function \eqref{eq:psi-c} with the 
help of the recurrence relation
\begin{equation}
    \psi_{k+1}(x) = \sqrt{\frac{2}{k+1}} x \psi_k(x) - \sqrt{\frac{k}{k+1}} 
\psi_{k-1}(x)
\end{equation}
with the initial condition
\begin{equation}
    \psi_0(x) = \frac{1}{\sqrt[4]{\pi}} e^{-x^2/2}, \quad 
    \psi_1(x) = \frac{\sqrt{2}}{\sqrt[4]{\pi}} x e^{-x^2/2}.
\end{equation}
In this way we can compute $\psi(0)$ and get the first initial condition for the equation \eqref{eq:eve}. To compute the 
derivatives note that
\begin{equation}
    \psi'(x) = \sum^{+\infty}_{k=0} c'_k \psi_k(x),
\end{equation}
where the new coefficients $c'_k$ are given by
\begin{equation}
    c'_k = \frac{1}{\sqrt{2}} (\sqrt{k+1} c_{k+1} - \sqrt{k} c_{k-1}).
\end{equation}
This means that the derivatives $\psi'(0)$, $\psi^{\prime\prime}(0)$ and so on, can be computed by the same routine used 
to compute the wave function itself, provided that this routine is given the new coefficients as input. The results of 
these computations are presented in the Table~\ref{tbl:2}. Note that our results agree (and greatly extend) those of 
Ref.~\cite{JMP-31-1947}.

\section{Alternate approach based on higher-order uncertainties}

A function $f(x)$ is referred to as convex if 
\begin{equation}\label{eq:convex}
    f(t x_1 + (1-t)x_2) \leq t f(x_1) + (1-t) f(x_2)
\end{equation}
for all $x_1$, $x_2$ and all $0 \leq t \leq 1$. If reverse inequality is valid,
\begin{equation}\label{eq:concave}
    f(t x_1 + (1-t)x_2) \geq t f(x_1) + (1-t) f(x_2)
\end{equation}
for all $x_1$, $x_2$ and all $0 \leq t \leq 1$, then the function $f$ is referred to as concave. The parameter $x$ is 
not necessarily to be a number, it can be a quantum state (and whatever else for what convex combinations are defined). 
In particular, a linear function is both convex and concave. If, for a given convex function $f$ of multipartite quantum 
state, we establish the inequality 
\begin{equation}\label{eq:fmax}
   f(\hat{\varrho}) \leq f_{\mathrm{max}}
\end{equation}
for all factorizable states then, due to inequality \eqref{eq:convex}, we automatically obtain that this inequality is 
valid for all separable states. If for a concave function we establish the inequality 
\begin{equation}\label{eq:fmin}
    f(\hat{\varrho}) \geq f_{\mathrm{min}}
\end{equation}
for all factorizable states then, due to inequality \eqref{eq:concave}, this inequality is automatically valid for all 
separable states. The inequalities of the form \eqref{eq:fmax} are more typical for finite-dimensional quantum systems, 
where the quantities under study are bounded from above (like Bell inequalities). The inequalities of the form 
\eqref{eq:fmin} are more natural in continuous-variable case, where the quantities can be arbitrarily large but bounded 
from below (like uncertainty relation). In both cases it is not necessary to explicitly keep track of all factorizable 
components of a separable state, since if the quantity in question has some convexity property then it is enough to 
consider only factorizable states, which greatly simplifies the notation.

Since $\langle \hat{A} \rangle_{\hat{\varrho}}$ is a linear function of $\hat{\varrho}$ it is both linear and concave, 
so it is enough to consider factorizable states only. For a factorizable state we have 
\begin{equation}
\begin{split}
    \langle (\hat{x}_a &+ \hat{x}_b)^4 + (\hat{p}_a - \hat{p}_b)^4 \rangle = 
    \langle \hat{x}^4_a + \hat{p}^4_a \rangle + \langle \hat{x}^4_b + \hat{p}^4_b \rangle \\
    &+ 6 \langle \hat{x}^2_a \rangle \langle \hat{x}^2_b\rangle + 6 \langle \hat{p}^2_a \rangle \langle 
    \hat{p}^2_b\rangle.
\end{split}
\end{equation}
The right-hand side of these equalities can be easily estimated as
\begin{equation}
\begin{split}
    \langle (\hat{x}_a &+ \hat{x}_b)^4 + (\hat{p}_a - \hat{p}_b)^4 \rangle \\
    &\geq 2 \lambda^{(4)}_{\mathrm{min}} + 12 \sqrt{\langle \hat{x}^2_a 
    \rangle \langle \hat{p}^2_a \rangle \langle \hat{x}^2_b \rangle \langle \hat{p}^2_b \rangle} \\
    &\geq 2 \lambda^{(4)}_{\mathrm{min}} + 3 = 5.7934 > \Lambda^{(4)}_{\mathrm{min}} = 5.5868,
\end{split}
\end{equation}
and thus the inequality $\langle (\hat{x}_a + \hat{x}_b)^4 + (\hat{p}_a - \hat{p}_b)^4 \rangle \geq 5.5868$ is valid for 
all bipartite separable states. We see that in the case of $2n=4$ this approach gives slightly better estimation than 
the method based on partial transposition. 

This approach can be extended for higher orders, but for $2n>4$ it will give worse results. For example, for $2n=6$ we 
have
\begin{equation}
\begin{split}
    &\langle (\hat{x}_a + \hat{x}_b)^6 + (\hat{p}_a - \hat{p}_b)^6 \rangle = 
    \langle \hat{x}^6_a + \hat{p}^6_a + \hat{x}^6_b + \hat{p}^6_b \rangle \\
    &+ 15 (\langle \hat{x}^2_a \rangle \langle \hat{x}^4_b \rangle + \langle 
    \hat{x}^4_a \rangle \langle \hat{x}^2_b \rangle + \langle \hat{p}^2_a \rangle 
    \langle \hat{p}^4_b \rangle + \langle \hat{p}^4_a \rangle \langle \hat{p}^2_b \rangle) \\
    &+ 20 (\langle \hat{x}^3_a \rangle \langle \hat{x}^3_b \rangle - \langle \hat{p}^3_a \rangle \langle \hat{p}^3_b 
    \rangle).
\end{split}
\end{equation}
The moments of third order can be estimated as $|\langle \hat{x}^3 \rangle| \leq \sqrt{\langle \hat{x}^2 \rangle \langle 
\hat{x}^4 \rangle}$, and we can write
\begin{equation}\label{eq:xp6}
\begin{split}
    &\langle (\hat{x}_a + \hat{x}_b)^6 + (\hat{p}_a - \hat{p}_b)^6 \rangle 
    \geq \langle \hat{x}^6_a + \hat{p}^6_a + \hat{x}^6_b + \hat{p}^6_b \rangle \\ 
    &+ 15 (\langle \hat{x}^2_a \rangle \langle \hat{x}^4_b \rangle + \langle 
    \hat{x}^4_a \rangle \langle \hat{x}^2_b \rangle + \langle \hat{p}^2_a \rangle 
    \langle \hat{p}^4_b \rangle + \langle \hat{p}^4_a \rangle \langle \hat{p}^2_b \rangle) \\
    &- 20 \Bigl(\sqrt{\langle \hat{x}^2_a \rangle \langle \hat{x}^4_a 
    \rangle \langle \hat{x}^2_b \rangle \langle \hat{x}^4_b \rangle} + 
    \sqrt{\langle \hat{p}^2_a \rangle \langle \hat{p}^4_a \rangle \langle 
    \hat{p}^2_b \rangle \langle \hat{p}^4_b \rangle}\Bigr).
\end{split}
\end{equation}
We can combine the terms on the right-hand side in such a way to get full squares so that we have
\begin{equation}
\begin{split}
    &\text{rhs of Eq.~\eqref{eq:xp6}}\ = \langle \hat{x}^6_a + \hat{p}^6_a \rangle + \langle \hat{x}^6_b + \hat{p}^6_b 
    \rangle \\
    &+ 5 (\langle \hat{x}^2_a \rangle \langle \hat{x}^4_b \rangle + 
    \langle \hat{x}^4_a \rangle \langle \hat{x}^2_b \rangle + \langle \hat{p}^2_a \rangle 
    \langle \hat{p}^4_b \rangle + \langle \hat{p}^4_a \rangle \langle \hat{p}^2_b \rangle) \\
    &+ 10 \Bigr(\sqrt{\langle \hat{x}^2_a \rangle \langle \hat{x}^4_b \rangle} 
    - \sqrt{\langle \hat{x}^4_a \rangle \langle \hat{x}^2_b \rangle}\Bigr)^2 \\
    &+ 10 \Bigl(\sqrt{\langle \hat{p}^2_a \rangle \langle \hat{p}^4_b \rangle} 
    - \sqrt{\langle \hat{p}^4_a \rangle \langle \hat{p}^2_b \rangle}\Bigr)^2. 
\end{split}
\end{equation}
If we ignore the squares and apply the inequality $a+b \geq 2\sqrt{ab}$ twice, we obtain
\begin{equation}
\begin{split}
    &\langle (\hat{x}_a + \hat{x}_b)^6 + (\hat{p}_a - \hat{p}_b)^6 \rangle \geq 2 \lambda^{(3)}_{\mathrm{min}} \\
    &+ 10\Bigl(\sqrt{\langle \hat{x}^2_a \rangle \langle \hat{p}^2_a \rangle 
    \langle \hat{x}^4_b \rangle \langle \hat{p}^4_b \rangle} + \sqrt{\langle 
    \hat{x}^4_a \rangle \langle \hat{p}^4_a \rangle \langle \hat{x}^2_b \rangle \langle \hat{p}^2_b \rangle}\Bigr).
\end{split}
\end{equation}
To move further, we need a higher-order analog of the Heisenberg uncertainty relation. This relation has been derived in 
Ref.~\cite{JPhysA.33.L83} and reads as 
\begin{equation}
    \langle \hat{x}^{2n} \rangle \langle \hat{p}^{2n} \rangle \geq \left(\frac{(2n)!}{2^{2n}n!}\right)^2,
\end{equation}
where the number on the right-hand side is the value of the left-hand side at vacuum. This leads to the estimation
\begin{equation}
\begin{split}
    \langle (\hat{x}_a &+ \hat{x}_b)^6 + (\hat{p}_a - \hat{p}_b)^6 \rangle \\
    &\geq 2 \lambda^{(3)}_{\mathrm{min}} + 10\left(\sqrt{\frac{1}{4}\frac{9}{16}} + 
    \sqrt{\frac{1}{4}\frac{9}{16}}\right) \\ 
    &= 2 \lambda^{(3)}_{\mathrm{min}} + \frac{30}{4} = 13.406,
\end{split}
\end{equation}
which is weaker then the bound obtained with PPT. The conclusion --- one cannot simply estimate this quantity term by 
term. There are several sources of loosing information --- we replace each term (more precisely, each group of terms) 
with simpler terms, then we estimate them individually. The replacement may not be tight, and individual estimation of 
terms is definitely not tight since different terms take minimum at different states. 

\section{The wave function of the state \eqref{eq:psi2}}

\begin{thrm}
The wave function of the state \eqref{eq:psi2} is given by the following expressions:
\begin{equation}\label{eq:bS2+}
\begin{split}
    \psi^{(2)}_\xi(x, y) & = \frac{1}{2} 
\frac{\exp\left(\frac{y^2-x^2}{2}\right)}{\sqrt{2\atanh\xi}} \\
&\biggl[\erf\left(\frac{y+x\sqrt{\xi}}{\sqrt{1-\xi}}\right) - 
\erf\left(\frac{y-x\sqrt{\xi}}{\sqrt{1-\xi}}\right)\biggr]
\end{split}
\end{equation}
if $0 \leq \xi < 1$ and 
\begin{equation}\label{eq:bS2-}
\begin{split}
    \psi^{(2)}_\xi(x, y) & = 
\frac{\exp\left(\frac{y^2-x^2}{2}\right)}{\sqrt{-2\atanh\xi}} 
\im\left[\erf\left(\frac{y + i\sqrt{-\xi}x}{\sqrt{1-\xi}}\right)\right]
\end{split}
\end{equation}
if $-1 < \xi \leq 0$.
\end{thrm}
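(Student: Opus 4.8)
The plan is to evaluate the Fock-space series $\psi^{(2)}_\xi(x,y)=N(\xi)\sum_{k=0}^{\infty}\frac{\xi^k}{\sqrt{2k+1}}\,\psi_{2k+1}(x)\psi_{2k}(y)$ in closed form, where the $\psi_n$ are the oscillator eigenfunctions and $N(\xi)=\sqrt{\xi/\atanh\xi}$. The awkward factor is the half-integer power $1/\sqrt{2k+1}$, and my first move is to remove it by peeling off one raising operator in the $x$-variable. Since $\hat a^\dagger=\frac{1}{\sqrt2}(\hat x-i\hat p)$ acts on wave functions as $\frac{1}{\sqrt2}(x-\partial_x)$ and $\hat a^\dagger\psi_{2k}=\sqrt{2k+1}\,\psi_{2k+1}$, I can write $\frac{1}{\sqrt{2k+1}}\psi_{2k+1}(x)=\frac{1}{2k+1}\,\frac{1}{\sqrt2}(x-\partial_x)\psi_{2k}(x)$, which upgrades the half-integer weight to the integer weight $1/(2k+1)$ and pulls a fixed ($k$-independent) differential operator outside the sum.

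Next I would linearise the integer weight with the elementary identity $\frac{1}{2k+1}=\int_0^1 u^{2k}\,du$, so that the residual sum becomes $\sum_k (\xi u^2)^k\,\psi_{2k}(x)\psi_{2k}(y)$, i.e. the even-in-$w$ part of the Mehler kernel $M(x,y;w)=\sum_n w^n\psi_n(x)\psi_n(y)$. Summing by Mehler's formula and folding the even part into an integral over a symmetric interval reduces the whole quantity to
\[
   \psi^{(2)}_\xi(x,y)=\frac{N(\xi)}{2\sqrt{\xi}}\,\frac{1}{\sqrt2}(x-\partial_x)\int_{-\sqrt{\xi}}^{\sqrt{\xi}}M(x,y;w)\,dw .
\]
Acting with the differential operator on the Gaussian $M$ is a one-line computation giving $\frac{1}{\sqrt2}(x-\partial_x)M=\frac{\sqrt2\,(x-wy)}{1-w^2}\,M$, so I am left with a single $w$-integral of an explicit Gaussian against a rational prefactor.

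The decisive step — and the one I expect to be the main obstacle — is the substitution that turns this $w$-integral into an error function. Writing $M$ as $\frac{1}{\sqrt{\pi(1-w^2)}}\exp(E)$, I would introduce $s=\frac{y-wx}{\sqrt{1-w^2}}$; the point is that this single change of variable simultaneously (i) collapses the quadratic exponent to $E=-s^2+\frac{y^2-x^2}{2}$ with the extra term independent of $w$, and (ii) turns the rational measure into an exact differential, $\frac{x-wy}{(1-w^2)^{3/2}}\,dw=-\,ds$. Hence the integral reduces to $\int e^{-s^2}\,ds$ between the images of the endpoints $w=\pm\sqrt{\xi}$, namely $s=\frac{y\mp x\sqrt{\xi}}{\sqrt{1-\xi}}$, producing exactly the difference of error functions of \eqref{eq:bS2+}; matching constants via $N(\xi)/\sqrt{\xi}=1/\sqrt{\atanh\xi}$ then reproduces the prefactor $\tfrac12\,e^{(y^2-x^2)/2}/\sqrt{2\atanh\xi}$. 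Establishing this substitution rigorously (both the exponent identity and the exactness of the measure) is the real content; everything else is bookkeeping of limits and normalisation.

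Finally, for $-1<\xi\le 0$ I would obtain \eqref{eq:bS2-} by analytic continuation $\sqrt{\xi}\to i\sqrt{-\xi}$: the contour for $w$ becomes imaginary, $s$ becomes complex, and $\int e^{-s^2}\,ds$ along the rotated contour yields the imaginary part of a single error function with argument $\frac{y+i\sqrt{-\xi}\,x}{\sqrt{1-\xi}}$, while $\atanh\xi$ turns negative and produces $\sqrt{-2\atanh\xi}$. As consistency checks I would verify that the two branches agree at $\xi=0$ and that the $\xi\to 0^{+}$ limit recovers the wave function $\psi_1(x)\psi_0(y)$ of the factorised state $|1,0\rangle$.
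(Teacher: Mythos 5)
Your proof is correct and reaches the stated closed forms, but it takes a genuinely different route from the paper's, even though both rest on the same key tool: Mehler's bilinear generating function $\sum_k(\eta/2)^kH_k(x)H_k(y)/k!$. The paper works with the Hermite series $S_\xi(x,y)=\sum_k\xi^kH_{2k+1}(x)H_{2k}(y)/(2^{2k}(2k+1)!)$ and removes the obstructing factor $(2k+1)$ by differentiating once in $x$ (via $H_{2k+1}'=2(2k+1)H_{2k}$); the derivative is then recognized as the even combination $s(\eta)+s(-\eta)$ of Mehler kernels, summed in closed form, and the error functions appear by integrating this explicit Gaussian back over $x$ from $0$, where $S_\xi(0,y)=0$. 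You instead trade $1/\sqrt{2k+1}$ for $1/(2k+1)$ by peeling off a raising operator, linearize $1/(2k+1)=\int_0^1u^{2k}\,du$, and carry out the final integration in the Mehler parameter $w$ rather than in $x$; your substitution $s=(y-wx)/\sqrt{1-w^2}$ --- which indeed satisfies $E=-s^2+(y^2-x^2)/2$ and $(x-wy)(1-w^2)^{-3/2}\,dw=-ds$, as one checks directly, together with $\tfrac{1}{\sqrt{2}}(x-\partial_x)M=\sqrt{2}(x-wy)(1-w^2)^{-1}M$ --- plays the role of the paper's elementary $x$-integration of $\cosh(\cdot)\,e^{-(\cdot)}$. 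The endpoints $w=\pm\sqrt{\xi}$ map to $s=(y\mp x\sqrt{\xi})/\sqrt{1-\xi}$ and the constant $N(\xi)/\sqrt{\xi}=1/\sqrt{\atanh\xi}$ reproduces the prefactor of \eqref{eq:bS2+}, so the bookkeeping closes. The paper's route is marginally more elementary; yours makes the continuation to $-1<\xi<0$ more transparent, since only the integration endpoints rotate onto the imaginary axis while the integrand is unchanged (the paper instead reruns the computation with $\eta=i\sqrt{-\xi}$, turning $\cosh$ into $\cos$). The $\xi\to0$ consistency check you mention is also carried out explicitly in the paper, and it is genuinely needed there because both closed forms are of the indeterminate type $0/0$ at $\xi=0$.
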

\begin{proof}
Note that if we let $\xi$ be negative in Eq.~\eqref{eq:bS2+} we get 
Eq.~\eqref{eq:bS2-} with 
uncertainty in sign (since there are two complex square roots of a negative 
number which differ 
in sign), so Eq.~\eqref{eq:bS2-} gives the correct expression in the case of 
negative $\xi$. From 
the definition \eqref{eq:psi2} we have
\begin{equation}\label{eq:psi-1}
    \psi^{(2)}_\xi(x, y) = \sqrt{\frac{\xi}{2\pi\atanh\xi}} S_\xi(x, y) 
e^{-\frac{x^2+y^2}{2}},
\end{equation}
where $S_\xi(x, y)$ is given by the following series:
\begin{equation}
    S_\xi(x, y) = \sum^{+\infty}_{k=0} \frac{\xi^k}{2^{2k}(2k+1)!} H_{2k+1}(x) 
H_{2k}(y).
\end{equation}
We first derive the expression for this series for $0 \leq \xi < 1$. To do it, 
let us take the 
partial derivative with respect to $x$. We get
\begin{equation}\label{eq:pS}
    \frac{\partial S_\xi}{\partial x} = 2 \sum^{+\infty}_{k=0} \frac{\xi^k}{2^{2k}(2k)!} H_{2k}(x) 
H_{2k}(y) = s(\eta) + s(-\eta),
\end{equation}
where $\eta = \sqrt{\xi}$, and $s(\eta)$ is defined via
\begin{equation}\label{eq:f}
\begin{split}
    s(\eta) &= \sum^{+\infty}_{k=0} \frac{\eta^k}{2^k k!} H_k(x) H_k(y) \\
    &= \frac{1}{\sqrt{1-\eta^2}} \exp\left(\frac{2xy\eta - (x^2+y^2)\eta^2}{1-\eta^2}\right).
\end{split}
\end{equation}
We thus obtain the following expression for the partial derivative:
\begin{equation}
    \frac{\partial S_\xi}{\partial x} = \frac{2}{\sqrt{1-\xi}} 
\cosh\left(\frac{2xy\sqrt{\xi}}{1-\xi}\right) \exp\left(-\frac{(x^2+y^2)\xi}{1-\xi}\right). 
\nonumber
\end{equation}
Integrating both sides of this relation and taking into account that $S_\xi(0, y) = 0$ we arrive to 
an expression for the series $S_\xi(x, y)$
\begin{equation}
    S_\xi(x, y) = \sqrt{\frac{\pi}{\xi}}\frac{e^{y^2}}{2} 
\biggl[\erf\left(\frac{y+x\sqrt{\xi}}{\sqrt{1-\xi}}\right) - 
\erf\left(\frac{y-x\sqrt{\xi}}{\sqrt{1-\xi}}\right)\biggr]. \nonumber
\end{equation}
Combining it with Eq.~\eqref{eq:psi-1}, we get the wave function given by Eq.~\eqref{eq:bS2+}.

In the case of negative $\xi$ the relation \eqref{eq:pS} is valid provided that $\eta = 
i\sqrt{-\xi}$. We have the following expression for the partial derivative:
\begin{equation}
    \frac{\partial S_\xi}{\partial x} = \frac{2}{\sqrt{1-\xi}} 
    \cos\left(\frac{2xy\sqrt{-\xi}}{1-\xi}\right) \exp\left(-\frac{(x^2+y^2)\xi}{1-\xi}\right). \nonumber
\end{equation}
Integrating and taking into account that $S_\xi(0, y) = 0$, we obtain 
\begin{equation}
    S_\xi(x, y) = \sqrt{-\frac{\pi}{\xi}} e^{y^2} \im\left[\erf\left(\frac{y+i\sqrt{-\xi}x}{\sqrt{1-\xi}}\right)\right],
\end{equation}
which leads to the wave function \eqref{eq:bS2-}.

For $\xi=0$ the wave functions \eqref{eq:bS2+} and \eqref{eq:bS2-} must be the wave function of the state $|1,0\rangle$. 
Direct substitution $\xi=0$ into those expressions results in the indeterminate form $0/0$, so more careful analysis is 
needed to determine the value of the functions \eqref{eq:bS2+} and \eqref{eq:bS2-} at $\xi=0$ and demonstrate that it is 
exactly the wave function of the state $|1,0\rangle$. We show that this is true in the limit $\xi \to 0$. We first 
consider the case of $\xi$ approaching zero from above. We have 
\begin{equation}
    \erf\left(\frac{y\pm x\sqrt{\xi}}{\sqrt{1-\xi}}\right) = \erf(y) - 
    \frac{2e^{-y^2}}{\sqrt{\pi}} \left(y-\frac{y\pm x\sqrt{\xi}}{\sqrt{1-\xi}}\right) + \mathcal{O}(\xi), \nonumber
\end{equation}
which follows from the relations
\begin{equation}
    \erf'(y) = \frac{2}{\sqrt{\pi}} e^{-y^2}, \quad y-\frac{y\pm x\sqrt{\xi}}{\sqrt{1-\xi}} = \mathcal{O}(\sqrt{\xi}),
\end{equation}
and the Taylor expansion of the error function at $y$. Subtracting one from the other, we get
\begin{equation}
\begin{split}
    \erf\left(\frac{y+x\sqrt{\xi}}{\sqrt{1-\xi}}\right) &- \erf\left(\frac{y-x\sqrt{\xi}}{\sqrt{1-\xi}}\right) \\
    &= \frac{4}{\sqrt{\pi}} e^{-y^2} \frac{x\sqrt{\xi}}{\sqrt{1-\xi}} + \mathcal{O}(\xi).
\end{split}
\end{equation}
Substituting this to Eq.~\eqref{eq:bS2+} and taking the limit $\xi \to 0$ we get $\sqrt{2/\pi} x 
e^{-\frac{x^2+y^2}{2}}$, i.e., the wave function of the state $|1, 0\rangle$. The case of $\xi$ approaching zero from 
below can be considered analogously.
\end{proof}

\section{Another non-Gaussian state}

Here we present another state that maximally violates the inequality \eqref{eq:bS}. It reads as follows:
\begin{equation}\label{eq:psi2'}
    |\psi^{\prime(2)}_\xi\rangle = \sqrt{\frac{-2\xi^2}{\ln(1-\xi^2)}} \sum^{+\infty}_{k=0} 
    \frac{\xi^k}{\sqrt{2k+2}} |2k+2, 2k+1\rangle,
\end{equation}
where $|\xi| < 1$. For $\xi = 0$ it is just the factorizable state $|2, 1\rangle$. The left-hand side of 
Eq.~\eqref{eq:bS} for this state is given by
\begin{equation}
    \langle (\hat{a}^{\dagger 2} \pm \hat{b}^2)(\hat{a}^2 \pm \hat{b}^{\dagger 2}) \rangle =
    -\frac{4\xi^2 (2\mp\xi)}{\ln(1-\xi^2)(1\mp\xi)^2},
\end{equation}
and we see that $\langle (\hat{a}^{\dagger 2} \pm \hat{b}^2)(\hat{a}^2 \pm \hat{b}^{\dagger 2}) \rangle \to 0$ when $\xi 
\to \mp 1$.

\begin{figure*}
\includegraphics{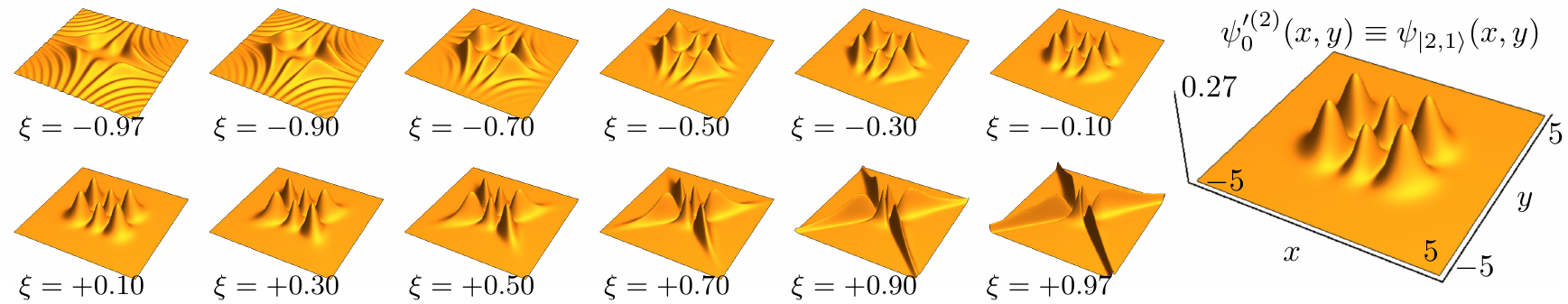}
\caption{The wave function $\psi^{\prime (2)}_\xi(x,y)$ of the state \eqref{eq:psi2'} for different values of $\xi$. The 
scale of each axis is shown on the right.}
\end{figure*}

We prove now that the wave function of the state \eqref{eq:psi2'} is given by
\begin{equation}\label{eq:bS'-1}
\begin{split}
    \psi^{\prime (2)}_\xi(x, y) &= \frac{\exp\left(\frac{y^2-x^2}{2}\right)}{2\sqrt{-\ln(1-\xi^2)}} \biggl[ 2\erf(y)  \\
    &+\erf\left(\frac{-y+\sqrt{\xi}x}{\sqrt{1-\xi}}\right) - \erf\left(\frac{y+\sqrt{\xi}x}{\sqrt{1-\xi}}\right)\biggr]
\end{split}
\end{equation}
for $0 < \xi < 1$, and by
\begin{equation}\label{eq:bS'-2}
\begin{split}
    \psi^{\prime (2)}_\xi(x, y) &= \frac{\exp\left(\frac{y^2-x^2}{2}\right)}{\sqrt{-\ln(1-\xi^2)}} \times \\
    &\left(\re\left[\erf\left(\frac{i\sqrt{-\xi}x + y}{\sqrt{1-\xi}}\right)\right] - \erf(y)\right).
\end{split}
\end{equation}
for $-1 < \xi < 0$. For $\xi = 0$ we get the wave function of the state $|2, 1\rangle$, i.e.,
\begin{equation}\label{eq:bS'-3}
    \psi^{\prime (2)}_0(x, y) = \frac{1}{\sqrt{\pi}} (2x^2-1) y e^{-\frac{x^2+y^2}{2}}.
\end{equation}
From the definition \eqref{eq:psi2'} we have
\begin{equation}
    \psi^{\prime (2)}_\xi(x, y) = \sqrt{\frac{-2\xi^2}{2\pi\ln(1-\xi^2)}} S_\xi(x, y) e^{-\frac{x^2+y^2}{2}},
\end{equation}
where $S_\xi(x, y)$ is given by the following series:
\begin{equation}
    S_\xi(x, y) = \sum^{+\infty}_{k=0} \frac{\xi^k}{2^{2k+1} (2k+2)!} H_{2k+2}(x) H_{2k+1}(y).
\end{equation}
A compact expression for this series can be obtained with the same trick that we used before ---  
by taking the partial derivative with respect to $x$. We have
\begin{equation}
    \frac{\partial S_\xi}{\partial x} = 2 \sum^{+\infty}_{k=0} \frac{\xi^k}{2^{2k+1} (2k+1)!} H_{2k+1}(x) H_{2k+1}(y).
\end{equation}
We first consider the case of $0 < \xi < 1$. We can write
\begin{equation}
\begin{split}
    \eta \frac{\partial S_\xi}{\partial x} &= 2\sum^{+\infty}_{k=0} \frac{(\eta/2)^{2k+1}}{(2k+1)!} H_{2k+1}(x) 
    H_{2k+1}(y) \\
    &= s(\eta) - s(-\eta),
\end{split}
\end{equation}
where $\eta = \sqrt{\xi}$ and $s(\eta)$ is given by the equation \eqref{eq:f}. We 
get the following explicit expression for the partial derivative:
\begin{equation}
    \frac{\partial S_\xi}{\partial x} = \frac{2e^{-\frac{\xi}{1-\xi}(x^2 + y^2)}}{\sqrt{\xi(1-\xi)}} 
    \sinh\left(\frac{2xy\sqrt{\xi}}{1-\xi}\right).
\end{equation}
Integrating, we get
\begin{equation}\label{eq:S2}
\begin{split}
    &S(x, y) - S(0, y) = \frac{\sqrt{\pi}}{2\xi} \left[2 \erf\left(\frac{y}{\sqrt{1-\xi}}\right)\right. \\
    &+\left. \erf\left(\frac{-y+\sqrt{\xi}x}{\sqrt{1-\xi}}\right) - 
    \erf\left(\frac{y+\sqrt{\xi}x}{\sqrt{1-\xi}}\right)\right].
\end{split}
\end{equation}
In contrast with the previous case, $S_\xi(0, y)$ is not zero, so we must compute it separately. 
We have
\begin{equation}\label{eq:S-2}
\begin{split}
    S_\xi(0, y) &= -\sum^{+\infty}_{k=0} \frac{(-\xi)^k}{2^{2k+1}(k+1)!} H_{2k+1}(y) \\
    &= -\frac{1}{2}\sum^{+\infty}_{k=0} \frac{t^k}{(k+1)!} H_{2k+1}(y) \equiv -\frac{1}{2}f(t, y),
\end{split}
\end{equation}
where $t = -\xi/4$. To obtain a compact expression for $f(t, y)$ we use the same approach --- we 
first compute the partial derivative
\begin{equation}
    \frac{\partial}{\partial t} (t f(t, y)) = \sum^{+\infty}_{k=0} \frac{t^k}{k!} H_{2k+1}(y).
\end{equation}
According to Ref.~\cite{*[][{ p. 708, Eq.~(5.12.1-4)}] prudnikov2}, the series on the right-hand 
side is
\begin{equation}
    \sum^{+\infty}_{k=0} \frac{t^k}{k!} H_{2k+1}(y) = \frac{2y}{(1+4t)^{3/2}} \exp\left(\frac{4x^2t}{1+4t}\right).
\end{equation}
We thus have (taking into account that $tf(t,y) = 0$ for $t=0$)
\begin{equation}
    t f(t, y) = -\frac{\sqrt{\pi}}{2} e^{y^2} \left[\erf\left(\frac{y}{\sqrt{1+4t}}\right) - \erf(y)\right].
\end{equation}
Substituting this expression into Eq.~\eqref{eq:S-2}, we get
\begin{equation}
    S_\xi(0, y) = \frac{\sqrt{\pi}}{-\xi} e^{y^2} \left[\erf\left(\frac{y}{\sqrt{1-\xi}}\right) - \erf(y)\right],
\end{equation}
and from Eq.~\eqref{eq:S2} we obtain
\begin{equation}
\begin{split}
    &S_\xi(x, y) = \frac{\sqrt{\pi}}{2\xi} \biggl[2 \erf(y) \\
    &+\erf\left(\frac{-y+\sqrt{\xi}x}{\sqrt{1-\xi}}\right) - \erf\left(\frac{y+\sqrt{\xi}x}{\sqrt{1-\xi}}\right)\biggr].
\end{split}
\end{equation}
We finally arrive to the expression \eqref{eq:bS'-1} for the wave function. The case of $-1 < \xi < 
0$ can be considered analogously.

We also need to show that for $\xi \to 0$ the wave functions given by the expressions 
\eqref{eq:bS'-1} and \eqref{eq:bS'-2} becomes the wave function \eqref{eq:bS'-3}. We have
\begin{equation}\label{eq:d1}
\begin{split}
    \erf\left(\frac{-y+\sqrt{\xi}x}{\sqrt{1-\xi}}\right) &= -\erf(y) + \frac{2}{\sqrt{\pi}} e^{-y^2} \Delta_1 y \\
    &+\frac{1}{2!} \frac{4}{\sqrt{\pi}}ye^{-y^2} (\Delta_1 y)^2 + \mathcal{O}(\xi^{3/2}),
\end{split}
\end{equation}
where
\begin{equation}
    \Delta_1 y = \frac{\sqrt{\xi}x - (1-\sqrt{1-\xi})y}{\sqrt{1-\xi}}.
\end{equation}
The relation \eqref{eq:d1} is valid since
\begin{equation}
    \erf'(z) = \frac{2}{\sqrt{\pi}} e^{-z^2}, \quad \erf''(z) = -\frac{4}{\sqrt{\pi}} z e^{-z^2},
\end{equation}
and $\Delta_1 y = \mathcal{O}(\xi^{1/2})$. Similarly, we can write
\begin{equation}
\begin{split}
    \erf\left(\frac{y+\sqrt{\xi}x}{\sqrt{1-\xi}}\right) &= \erf(y) + \frac{2}{\sqrt{\pi}} e^{-y^2} \Delta_2 y \\
    &-\frac{1}{2!} \frac{4}{\sqrt{\pi}}ye^{-y^2} (\Delta_2 y)^2 + \mathcal{O}(\xi^{3/2}),
\end{split}
\end{equation}
where 
\begin{equation}
    \Delta_2 y = \frac{\sqrt{\xi}x + (1-\sqrt{1-\xi})y}{\sqrt{1-\xi}}.
\end{equation}
Substituting these expressions into Eq.~\eqref{eq:bS'-1} and taking into account that 
\begin{equation}
    \sqrt{-\ln(1-\xi^2)} \sim \xi, \quad \xi \to +0,
\end{equation}
we get that when $\xi \to +0$, Eq.~\eqref{eq:bS'-1} goes to Eq.~\eqref{eq:bS'-3}. The case of $\xi 
\to -0$ can be considered in the same way. This finishes the proof.

\section{Proof of Theorem \ref{thrm:bip}}

The state defined in Eq.~\eqref{eq:psi2} can also be used here, since for this state $\langle\hat{a}\rangle = 
\langle\hat{b}\rangle = 0$, so it remains to be proven only that all bipartite Gaussian states satisfy the inequality 
\eqref{eq:dbS}. Remember that the characteristic function of a bipartite quantum state with density operator 
$\hat{\varrho}$ is defined by $\chi(\vec{t}) = \langle e^{i(\vec{t}, \vec{\hat{X}})} \rangle$, where $\vec{t} = (t_1, 
t_2, t_3, t_4)$ and $\vec{\hat{X}} = (\hat{x}_a, \hat{x}_b, \hat{p}_a, \hat{p}_b)$. A state is called Gaussian if its 
characteristic function is Gaussian, i.e., if it can be written in the following form:
\begin{equation}\label{eq:chiG}
    \chi(\vec{t}) = e^{-(1/2)\vec{t}^{\mathrm{T}} \Sigma \vec{t} + i \vec{m}^{\mathrm{T}} \vec{t}},
\end{equation}
where $\Sigma = (\sigma_{ij})^4_{i,j=1}$ is a real symmetric 4$\times$4 matrix and $\vec{m} = (m_1, m_2, m_3, m_4)$ is a 
real 4-vector. There is no restriction on the vector $\vec{m}$, but to have the characteristic function of a quantum 
state, the matrix $\Sigma$ (the second-order moment covariance matrix) must satisfy the condition 
\cite{PhysRevLett.84.2726},
\begin{equation}\label{eq:sigma2}
    \tilde{\Sigma} = 
    \begin{pmatrix}
        \sigma_{11} & \sigma_{12} & \sigma_{13} + \frac{i}{2} & \sigma_{14} \\
        \sigma_{12} & \sigma_{22} & \sigma_{23} & \sigma_{24} + \frac{i}{2} \\
        \sigma_{13} - \frac{i}{2} & \sigma_{23} & \sigma_{33} & \sigma_{34} \\
        \sigma_{14} & \sigma_{24} - \frac{i}{2} & \sigma_{34} & \sigma_{44}
    \end{pmatrix} 
    \geq 0.
\end{equation}
For any two-mode Gaussian state, this condition is necessary and sufficient for physicality of the state. Due to the 
equality
\begin{equation}
    \langle e^{i t_1 \hat{x}_a} e^{i t_2 \hat{x}_b} e^{i t_3 \hat{p}_a} e^{i 
t_4 \hat{p}_b} \rangle = e^{-\frac{i}{2}(t_1 t_3 + t_2 t_4)} \chi(\vec{t}) 
\equiv \tilde{\chi}(\vec{t}), 
\nonumber
\end{equation}
we can compute the moments $\langle \hat{x}^n_a \hat{x}^m_b \hat{p}^k_a 
\hat{p}^l_b \rangle$ as follows:
\begin{equation}
    \langle \hat{x}^n_a \hat{x}^m_b \hat{p}^k_a \hat{p}^l_b \rangle = (-i)^{n+m+k+l}\left.\frac{\partial^{n+m+k+l} 
    \tilde{\chi}(\vec{t})}{\partial t^n_1 t^m_2 t^k_3 t^l_4}\right|_{\vec{t} = 0}.
\end{equation}
From this we immediately obtain that $\vec{m} = (\langle\hat{r}_j\rangle)^4_{j=1}$ and
\begin{equation}
    \Sigma = \frac{1}{2}(\langle(\Delta\hat{r}_j)(\Delta\hat{r}_k)+(\Delta\hat{r}_k)(\Delta\hat{r}_j)\rangle)^4_{j,k=1},
\end{equation}
where $(\hat{r}_1, \hat{r}_2, \hat{r}_3, \hat{r}_4) = (\hat{x}_a, \hat{x}_b, \hat{p}_a, \hat{p}_b)$. We see that proving 
the inequality \eqref{eq:dbS} for the states with the characteristic function \eqref{eq:chiG} is the same as proving 
the inequality \eqref{eq:bS:1}, $\langle(\hat{a}^{\dagger 2} \pm \hat{b}^2)(\hat{a}^2 \pm \hat{b}^{\dagger 2})\rangle 
\geq 2$, for the states with the characteristic function
\begin{equation}\label{eq:chiG0}
    \chi(\vec{t}) = e^{-(1/2)\vec{t}^{\mathrm{T}} \Sigma \vec{t}}.
\end{equation}
From now on we assume that $\vec{m} = \vec{0}$ and we have to prove the inequality \eqref{eq:bS:1} for all Gaussian 
states with the characteristic function of the form \eqref{eq:chiG0}. In fact, we are going to prove a more strict 
inequality
\begin{equation}\label{eq:ab0}
    \langle\hat{a}^{\dagger 2}\hat{a}^2 + \hat{b}^2\hat{b}^{\dagger 2}\rangle - 2|\langle\hat{a}^2\hat{b}^2\rangle| \geq 
    2
\end{equation}
for all states of the form \eqref{eq:chiG0}. Note that this inequality is invariant with respect to the transformation 
$\hat{a} \to \hat{a}e^{-i \varphi_a}$, $\hat{b} \to \hat{b}e^{-i \varphi_b}$. Since $\hat{a}e^{-i \varphi} = 
\hat{U}^\dagger(\varphi)\hat{a}\hat{U}(\varphi)$, where $\hat{U}(\varphi) = e^{-i \varphi\hat{n}}$ is the phase rotation 
operator, the invariance of the inequality \eqref{eq:ab0} with respect to this transformation means that the left-hand 
side of this inequality is the same for the original state $\hat{\varrho}$ and a transformed state $\hat{\varrho}' = 
(\hat{U}_a(\varphi_a)\otimes\hat{U}_b(\varphi_b))\hat{\varrho} 
(\hat{U}^\dagger_a(\varphi_a)\otimes\hat{U}^\dagger_b(\varphi_b))$ for arbitrary phases $\varphi_a$ and $\varphi_b$. We 
can use the freedom in choosing these phases to simplify the matrix $\Sigma$. The transformed state $\hat{\varrho}'$ is 
of the form \eqref{eq:chiG0} with the matrix $\Sigma'$. For this matrix we have
\begin{equation}
\begin{split}
    \sigma'_{11} = \frac{\sigma_{11} + \sigma_{33}}{2} + \sigma'',  \quad
    \sigma'_{33} = \frac{\sigma_{11} + \sigma_{33}}{2} - \sigma'',
\end{split}
\end{equation}
where $\sigma''$ reads as
\begin{equation}
    \sigma'' = \frac{\sigma_{11}-\sigma_{33}}{2} \cos(2\varphi_a) + \sigma_{13} \sin(2\varphi_a).
\end{equation}
The expressions for $\sigma'_{22}$ and $\sigma'_{44}$ are transformed in a similar way. From this we see that we can 
always choose $\varphi_a$ and $\varphi_b$ such that $\sigma'_{11} = \sigma'_{33}$ and $\sigma'_{22} = \sigma'_{44}$. So, 
we can assume from the beginning that the matrix $\Sigma$ has the property $\sigma_{11} = \sigma_{33} = \sigma_1$ and 
$\sigma_{22} = \sigma_{44} = \sigma_2$. In this case we have
\begin{equation}
\begin{split}
    \langle \hat{a}^{\dagger 2} \hat{a}^2 \rangle &= 2\sigma^2_1 - 2\sigma_1 + \sigma^2_{13} + \frac{1}{2}, \\ 
    \langle \hat{b}^2 \hat{b}^{\dagger 2} \rangle &= 2\sigma^2_2 + 2 \sigma_2 + \sigma^2_{24} + \frac{1}{2}, \\
    |\langle \hat{a}^2 \hat{b}^2 \rangle|^2 &= \frac{1}{4} \biggl[ 
    \Bigl((\sigma_{12}-\sigma_{34})^2+(\sigma_{14}+\sigma_{23})^2\Bigr)^2 \\
    - 4\sigma_{13}&\sigma_{24}((\sigma_{12}-\sigma_{34})^2-(\sigma_{14}+\sigma_{23})^2) + 
    4\sigma^2_{13}\sigma^2_{24}\biggr]. \nonumber
\end{split}
\end{equation}
When we substitute these quantities into the inequality \eqref{eq:ab0} we will get an inequality for the sigmas, which 
must also be derivable from the physicality condition expressed by the inequality \eqref{eq:sigma2} (where, as we have 
assumed, $\sigma_{11} = \sigma_{33} = \sigma_1$ and $\sigma_{22} = \sigma_{44} = \sigma_2$). Let us define
\begin{equation}
    A = 2\sigma^2_1 - 2\sigma_1 + 2\sigma^2_2 + 2\sigma_2 - 1,
\end{equation}
$u = \sigma_{12} - \sigma_{34}$ and $v = \sigma_{14} + \sigma_{23}$, then we have to prove that
\begin{equation}\label{eq:A}
\begin{split}
    (A + \sigma^2_{13} + \sigma^2_{24})^2 &\geq (u^2+v^2)^2 \\
    &- 4 \sigma_{13} \sigma_{24} (u^2-v^2) + 4 \sigma^2_{13} \sigma^2_{24}.
\end{split}
\end{equation}
If we manage to prove that $A \geq u^2 + v^2$, then we will prove the inequality \eqref{eq:A}. In fact, in this case we 
have $A^2 \geq (u^2+v^2)^2$ and  
\begin{equation}
\begin{split}
    2(\sigma^2_{13} + \sigma^2_{24}) A &\geq 4 |\sigma_{13} \sigma_{24}| (u^2+v^2) \\
    &\geq -4 \sigma_{13} \sigma_{24} (u^2-v^2),
\end{split}
\end{equation}
Moreover, $(\sigma^2_{13} + \sigma^2_{24})^2 \geq 4 \sigma^2_{13} \sigma^2_{24}$ independently of $A$ and thus 
Eq.~\eqref{eq:A} follows.

In order to prove that $A \geq u^2 + v^2$ note that for any matrix $P$ the matrix $P^\dagger \tilde{\Sigma} P$ is also 
positive, as well as the matrix $\overline{\Sigma} = \tilde{\Sigma} + P^\dagger \tilde{\Sigma} P$. If we take
\begin{equation}
    P = 
    \begin{pmatrix}
        0 & 0 & -1 & 0 \\
        0 & 0 & 0 & 1 \\
        1 & 0 & 0 & 0 \\
        0 & -1 & 0 & 0
    \end{pmatrix},
\end{equation}
we find that the following matrix is positive:
\begin{equation}
    \overline{\Sigma} = 
    \begin{pmatrix}
        2\sigma_1 & u & i & v \\
        u & 2\sigma_2 & v & i \\
        -i & v & 2\sigma_2 & -u \\
        v & -i & -u & 2\sigma_2
    \end{pmatrix}
    \geq 0.
\end{equation}
From the third-order minor obtained by canceling the fourth row and the fourth column we get the inequality $4\sigma^2_1 
\sigma_2 \geq \sigma_1 (u^2 + v^2) + \sigma_2$. From the third-order minor obtained by canceling the third row and the 
third column we get a similar inequality, $4\sigma_1 \sigma^2_2 \geq \sigma_2 (u^2 + v^2) + \sigma_1$. From these two 
inequalities we obtain
\begin{equation}
    (4\sigma^2_1 - 1) (4\sigma^2_2 - 1) \geq (u^2 + v^2)^2.
\end{equation}
Since $(4 \sigma_1 \sigma_2 - 1)^2 - (4\sigma^2_1 - 1) (4\sigma^2_2 - 1) = 4 (\sigma_1 - \sigma_2)^2 \geq 0$, we have 
$(4 \sigma_1 \sigma_2 - 1)^2 \geq (u^2 + v^2)^2$. Due to the inequalities $\sigma_1 \geq 1/2$ and $\sigma_2 \geq 1/2$ we 
obtain
\begin{equation}\label{eq:uv}
    4 \sigma_1 \sigma_2 - 1 \geq u^2 + v^2.
\end{equation}
From the non-negativity of the determinant $\det\tilde{M}$ we get
\begin{equation}\label{eq:uv2}
    (u^2 + v^2 - 4 \sigma_1 \sigma_2 + 1)^2 - 4 (\sigma_1 - \sigma_2)^2 \geq 0.
\end{equation}
Due to the inequality \eqref{eq:uv}, the inequality \eqref{eq:uv2} is equivalent to $4 \sigma_1 \sigma_2 - 1 - u^2 - v^2 
\geq 2 |\sigma_1 - \sigma_2|$, so we have $4\sigma_1 \sigma_2 - 2|\sigma_1 - \sigma_2| - 1 \geq u^2 + v^2$. If we prove 
that $A \geq 4\sigma_1 \sigma_2 - 2|\sigma_1 - \sigma_2| - 1$, we are done. After equivalent transformations this 
inequality becomes
\begin{equation}
    (\sigma_1 - \sigma_2)^2 + |\sigma_1 - \sigma_2| - (\sigma_1 - \sigma_2) \geq 0,
\end{equation}
which is obviously valid.


\begin{thebibliography}{16}%
\makeatletter
\providecommand \@ifxundefined [1]{%
 \@ifx{#1\undefined}
}%
\providecommand \@ifnum [1]{%
 \ifnum #1\expandafter \@firstoftwo
 \else \expandafter \@secondoftwo
 \fi
}%
\providecommand \@ifx [1]{%
 \ifx #1\expandafter \@firstoftwo
 \else \expandafter \@secondoftwo
 \fi
}%
\providecommand \natexlab [1]{#1}%
\providecommand \enquote  [1]{``#1''}%
\providecommand \bibnamefont  [1]{#1}%
\providecommand \bibfnamefont [1]{#1}%
\providecommand \citenamefont [1]{#1}%
\providecommand \href@noop [0]{\@secondoftwo}%
\providecommand \href [0]{\begingroup \@sanitize@url \@href}%
\providecommand \@href[1]{\@@startlink{#1}\@@href}%
\providecommand \@@href[1]{\endgroup#1\@@endlink}%
\providecommand \@sanitize@url [0]{\catcode `\\12\catcode `\$12\catcode
  `\&12\catcode `\#12\catcode `\^12\catcode `\_12\catcode `\%12\relax}%
\providecommand \@@startlink[1]{}%
\providecommand \@@endlink[0]{}%
\providecommand \url  [0]{\begingroup\@sanitize@url \@url }%
\providecommand \@url [1]{\endgroup\@href {#1}{\urlprefix }}%
\providecommand \urlprefix  [0]{URL }%
\providecommand \Eprint [0]{\href }%
\providecommand \doibase [0]{http://dx.doi.org/}%
\providecommand \selectlanguage [0]{\@gobble}%
\providecommand \bibinfo  [0]{\@secondoftwo}%
\providecommand \bibfield  [0]{\@secondoftwo}%
\providecommand \translation [1]{[#1]}%
\providecommand \BibitemOpen [0]{}%
\providecommand \bibitemStop [0]{}%
\providecommand \bibitemNoStop [0]{.\EOS\space}%
\providecommand \EOS [0]{\spacefactor3000\relax}%
\providecommand \BibitemShut  [1]{\csname bibitem#1\endcsname}%
\let\auto@bib@innerbib\@empty
\bibitem [{\citenamefont {Lynch}\ and\ \citenamefont
  {Mavromatis}(1990)}]{JMP-31-1947}%
  \BibitemOpen
  \bibfield  {author} {\bibinfo {author} {\bibfnamefont {R.}~\bibnamefont
  {Lynch}}\ and\ \bibinfo {author} {\bibfnamefont {H.~A.}\ \bibnamefont
  {Mavromatis}},\ }\href@noop {} {\bibfield  {journal} {\bibinfo  {journal} {J.
  Math. Phys.}\ }\textbf {\bibinfo {volume} {31}},\ \bibinfo {pages} {1947}
  (\bibinfo {year} {1990})}\BibitemShut {NoStop}%
\bibitem [{\citenamefont {Duan}\ \emph {et~al.}(2000)\citenamefont {Duan},
  \citenamefont {Giedke}, \citenamefont {Cirac},\ and\ \citenamefont
  {Zoller}}]{PhysRevLett.84.2722}%
  \BibitemOpen
  \bibfield  {author} {\bibinfo {author} {\bibfnamefont {L.-M.}\ \bibnamefont
  {Duan}}, \bibinfo {author} {\bibfnamefont {G.}~\bibnamefont {Giedke}},
  \bibinfo {author} {\bibfnamefont {J.~I.}\ \bibnamefont {Cirac}}, \ and\
  \bibinfo {author} {\bibfnamefont {P.}~\bibnamefont {Zoller}},\ }\href@noop {}
  {\bibfield  {journal} {\bibinfo  {journal} {Phys. Rev. Lett.}\ }\textbf
  {\bibinfo {volume} {84}},\ \bibinfo {pages} {2722} (\bibinfo {year}
  {2000})}\BibitemShut {NoStop}%
\bibitem [{\citenamefont {Shchukin}\ and\ \citenamefont
  {Vogel}(2005)}]{PhysRevLett.95.230502}%
  \BibitemOpen
  \bibfield  {author} {\bibinfo {author} {\bibfnamefont {E.}~\bibnamefont
  {Shchukin}}\ and\ \bibinfo {author} {\bibfnamefont {W.}~\bibnamefont
  {Vogel}},\ }\href@noop {} {\bibfield  {journal} {\bibinfo  {journal} {Phys.
  Rev. Lett.}\ }\textbf {\bibinfo {volume} {95}},\ \bibinfo {pages} {230502}
  (\bibinfo {year} {2005})}\BibitemShut {NoStop}%
\bibitem [{\citenamefont {Hillery}\ and\ \citenamefont
  {Zubairy}(2006{\natexlab{a}})}]{PhysRevLett.96.050503}%
  \BibitemOpen
  \bibfield  {author} {\bibinfo {author} {\bibfnamefont {M.}~\bibnamefont
  {Hillery}}\ and\ \bibinfo {author} {\bibfnamefont {M.~S.}\ \bibnamefont
  {Zubairy}},\ }\href@noop {} {\bibfield  {journal} {\bibinfo  {journal} {Phys.
  Rev. Lett.}\ }\textbf {\bibinfo {volume} {96}},\ \bibinfo {pages} {050503}
  (\bibinfo {year} {2006}{\natexlab{a}})}\BibitemShut {NoStop}%
\bibitem [{\citenamefont {Hillery}\ and\ \citenamefont
  {Zubairy}(2006{\natexlab{b}})}]{PhysRevA.74.032333}%
  \BibitemOpen
  \bibfield  {author} {\bibinfo {author} {\bibfnamefont {M.}~\bibnamefont
  {Hillery}}\ and\ \bibinfo {author} {\bibfnamefont {M.~S.}\ \bibnamefont
  {Zubairy}},\ }\href@noop {} {\bibfield  {journal} {\bibinfo  {journal} {Phys.
  Rev. A}\ }\textbf {\bibinfo {volume} {74}},\ \bibinfo {pages} {032333}
  (\bibinfo {year} {2006}{\natexlab{b}})}\BibitemShut {NoStop}%
\bibitem [{\citenamefont {Shen}\ \emph {et~al.}(2015)\citenamefont {Shen},
  \citenamefont {Assad}, \citenamefont {Grosse}, \citenamefont {Li},
  \citenamefont {Reid},\ and\ \citenamefont {Lam}}]{PhysRevLett.114.100403}%
  \BibitemOpen
  \bibfield  {author} {\bibinfo {author} {\bibfnamefont {Y.}~\bibnamefont
  {Shen}}, \bibinfo {author} {\bibfnamefont {S.~M.}\ \bibnamefont {Assad}},
  \bibinfo {author} {\bibfnamefont {N.~B.}\ \bibnamefont {Grosse}}, \bibinfo
  {author} {\bibfnamefont {X.~Y.}\ \bibnamefont {Li}}, \bibinfo {author}
  {\bibfnamefont {M.~D.}\ \bibnamefont {Reid}}, \ and\ \bibinfo {author}
  {\bibfnamefont {P.~K.}\ \bibnamefont {Lam}},\ }\href@noop {} {\bibfield
  {journal} {\bibinfo  {journal} {Phys. Rev. Lett.}\ }\textbf {\bibinfo
  {volume} {114}},\ \bibinfo {pages} {100403} (\bibinfo {year}
  {2015})}\BibitemShut {NoStop}%
\bibitem [{\citenamefont {Nha}\ \emph {et~al.}(2012)\citenamefont {Nha},
  \citenamefont {Lee}, \citenamefont {Ji},\ and\ \citenamefont
  {Kim}}]{PhysRevLett.108.030503}%
  \BibitemOpen
  \bibfield  {author} {\bibinfo {author} {\bibfnamefont {H.}~\bibnamefont
  {Nha}}, \bibinfo {author} {\bibfnamefont {S.-Y.}\ \bibnamefont {Lee}},
  \bibinfo {author} {\bibfnamefont {S.-W.}\ \bibnamefont {Ji}}, \ and\ \bibinfo
  {author} {\bibfnamefont {M.~S.}\ \bibnamefont {Kim}},\ }\href@noop {}
  {\bibfield  {journal} {\bibinfo  {journal} {Phys. Rev. Lett.}\ }\textbf
  {\bibinfo {volume} {108}},\ \bibinfo {pages} {030503} (\bibinfo {year}
  {2012})}\BibitemShut {NoStop}%
\bibitem [{\citenamefont {Bateman}\ and\ \citenamefont
  {Erd\'{e}lyi}(1953{\natexlab{a}})}]{bateman2-1}%
  \BibitemOpen
  \bibfield  {author} {\bibinfo {author} {\bibfnamefont {H.}~\bibnamefont
  {Bateman}}\ and\ \bibinfo {author} {\bibfnamefont {A.}~\bibnamefont
  {Erd\'{e}lyi}},\ }\href@noop {} {\emph {\bibinfo {title} {Higher
  transcendental functions}}},\ Vol.~\bibinfo {volume} {2}\ (\bibinfo
  {publisher} {McGraw-Hill Book Company, Inc.},\ \bibinfo {year}
  {1953})\BibitemShut {NoStop}%
\bibitem [{\citenamefont {Bateman}\ and\ \citenamefont
  {Erd\'{e}lyi}(1953{\natexlab{b}})}]{bateman2-2}%
  \BibitemOpen
  \bibfield  {author} {\bibinfo {author} {\bibfnamefont {H.}~\bibnamefont
  {Bateman}}\ and\ \bibinfo {author} {\bibfnamefont {A.}~\bibnamefont
  {Erd\'{e}lyi}},\ }\href@noop {} {\emph {\bibinfo {title} {Higher
  transcendental functions}}},\ Vol.~\bibinfo {volume} {2}\ (\bibinfo
  {publisher} {McGraw-Hill Book Company, Inc.},\ \bibinfo {year}
  {1953})\BibitemShut {NoStop}%
\bibitem [{\citenamefont {W\"{u}nsche}(1999)}]{job-1-264}%
  \BibitemOpen
  \bibfield  {author} {\bibinfo {author} {\bibfnamefont {A.}~\bibnamefont
  {W\"{u}nsche}},\ }\href@noop {} {\bibfield  {journal} {\bibinfo  {journal}
  {J. Opt. B: Quantum Semiclass. Opt.}\ }\textbf {\bibinfo {volume} {1}},\
  \bibinfo {pages} {264} (\bibinfo {year} {1999})}\BibitemShut {NoStop}%
\bibitem [{\citenamefont {Steeb}\ and\ \citenamefont {Hardy}(2004)}]{psqq}%
  \BibitemOpen
  \bibfield  {author} {\bibinfo {author} {\bibfnamefont {W.-H.}\ \bibnamefont
  {Steeb}}\ and\ \bibinfo {author} {\bibfnamefont {Y.}~\bibnamefont {Hardy}},\
  }\href@noop {} {\emph {\bibinfo {title} {Problems \& solutions in quantum
  computing \& quantum information}}}\ (\bibinfo  {publisher} {World Scientific
  Publishing},\ \bibinfo {year} {2004})\BibitemShut {NoStop}%
\bibitem [{Note1()}]{Note1}%
  \BibitemOpen
  \bibinfo {note} {\protect \texttt
  {https://software.intel.com/en-us/intel-mkl}}\BibitemShut {NoStop}%
\bibitem [{\citenamefont {Polizzi}(2009)}]{PhysRevB.79.115112}%
  \BibitemOpen
  \bibfield  {author} {\bibinfo {author} {\bibfnamefont {E.}~\bibnamefont
  {Polizzi}},\ }\href@noop {} {\bibfield  {journal} {\bibinfo  {journal} {Phys.
  Rev. B}\ }\textbf {\bibinfo {volume} {79}},\ \bibinfo {pages} {115112}
  (\bibinfo {year} {2009})}\BibitemShut {NoStop}%
\bibitem [{\citenamefont {Santhanam}(2000)}]{JPhysA.33.L83}%
  \BibitemOpen
  \bibfield  {author} {\bibinfo {author} {\bibfnamefont {T.~S.}\ \bibnamefont
  {Santhanam}},\ }\href@noop {} {\bibfield  {journal} {\bibinfo  {journal} {J.
  Phys. A: Math. Gen.}\ }\textbf {\bibinfo {volume} {33}},\ \bibinfo {pages}
  {L83} (\bibinfo {year} {2000})}\BibitemShut {NoStop}%
\bibitem [{\citenamefont {Prudnikov}\ \emph {et~al.}(1992)\citenamefont
  {Prudnikov}, \citenamefont {Brychkov},\ and\ \citenamefont
  {Marichev}}]{prudnikov2}%
  \BibitemOpen
  \bibfield  {author} {\bibinfo {author} {\bibfnamefont {A.~P.}\ \bibnamefont
  {Prudnikov}}, \bibinfo {author} {\bibfnamefont {Y.~A.}\ \bibnamefont
  {Brychkov}}, \ and\ \bibinfo {author} {\bibfnamefont {O.~I.}\ \bibnamefont
  {Marichev}},\ }\href@noop {} {\emph {\bibinfo {title} {Integrals and
  series}}},\ Vol.~\bibinfo {volume} {2}\ (\bibinfo  {publisher} {Gordon and
  Breach Science Publishers},\ \bibinfo {year} {1992})\BibitemShut {NoStop}%
\bibitem [{\citenamefont {Simon}(2000)}]{PhysRevLett.84.2726}%
  \BibitemOpen
  \bibfield  {author} {\bibinfo {author} {\bibfnamefont {R.}~\bibnamefont
  {Simon}},\ }\href@noop {} {\bibfield  {journal} {\bibinfo  {journal} {Phys.
  Rev. Lett.}\ }\textbf {\bibinfo {volume} {84}},\ \bibinfo {pages} {2726}
  (\bibinfo {year} {2000})}\BibitemShut {NoStop}%
\end{thebibliography}
\end{document}